\documentclass[10pt,journal]{IEEEtran}

\usepackage{amssymb}
\usepackage{amsmath}
\usepackage{cite}
\usepackage{url}
\usepackage{xcolor}
\usepackage{cite,graphicx,amsmath,amssymb}
\usepackage{fancyhdr}
\usepackage{mdwmath}
\usepackage{mdwtab}
\usepackage{caption}
\usepackage{amsthm}
\usepackage{setspace}
\usepackage{hyperref}
\usepackage{algorithm}
\usepackage{algorithmic}
\usepackage{multirow}
\usepackage{makecell}
\usepackage{mathtools}
\usepackage{subcaption}
\usepackage{bm}
\usepackage{tikz}
\usepackage{xurl}
\usepackage{stfloats}
\usepackage{mathrsfs}

\usepackage{booktabs}
\usepackage{multirow}
\usepackage{siunitx}
\usepackage{balance}

\hypersetup{colorlinks=true,
linkcolor=blue,
citecolor=blue,      
urlcolor=black,
}

\graphicspath{{./src/img/}}

\newtheorem{remark}{Remark}
\newtheorem{theorem}{Theorem}

\newtheorem{lemma}{Lemma}

\newtheorem{corollary}{Corollary}

\allowdisplaybreaks
\NewDocumentCommand{\multiubrace}{mmm}
 {
  \egreg_multiubrace:nnn {#1} {#2} {#3}
 }

\expandafter\def\expandafter\normalsize\expandafter{%
	\normalsize%
	\setlength\abovedisplayskip{4pt}%
	\setlength\belowdisplayskip{4pt}%
	\setlength\abovedisplayshortskip{2pt}%
	\setlength\belowdisplayshortskip{2pt}%
}
\begin{document}
\title{Generative Site-Specific Beamforming via Information-Maximizing Codebook} 
\author{{Cheng-Jie Zhao, Zhaolin Wang,~\IEEEmembership{Member, IEEE}, and Yuanwei Liu,~\IEEEmembership{Fellow, IEEE}}
	\vspace{-0.5cm}

\thanks{Cheng-Jie Zhao, Zhaolin Wang, and Yuanwei Liu are with Department of Electrical and Electronic Engineering, The University of Hong Kong, Hong Kong (e-mail: chengjie\_zhao@connect.hku.hk; zhaolin.wang@hku.hk; yuanwei@hku.hk).}
}
\maketitle

\begin{abstract}
	A novel generative site-specific beamforming (GenSSBF) framework is proposed, which integrates a site-information-maximizing (SIM) codebook with a conditional flow matching (CFM)-based beam generator. By this framework, the site-specific radio propagation environment is learned at the base station (BS), enabling the generation of high fidelity communication beams from coarse reference-signal-received-power (RSRP) feedback provided by user equipments (UEs). In the proposed design, a low-dimensional SIM probing codebook is first constructed by maximizing the mutual information between the RSRP feedback and the site-specific channel. This design not only reduces the initial beam sweeping overhead, but also enhances the amount of channel state information conveyed through UE feedback. By treating the RSRP feedback as a conditional prior, a CFM-based generative model is further developed to explicitly capture the uncertainty in beam generation. Specifically, a small set of UE-specific candidate beams is generated by inferring the learned generative model and sampling from the corresponding posterior distribution, after which the final data transmission beam is selected by the UE. Extensive simulation results demonstrate the effectiveness of both the proposed SIM codebook and the CFM-based beam generator. The proposed GenSSBF framework achieves beamforming performance nearly identical to maximum ratio transmission while requiring only eight probing beams and eight candidate beams.
\end{abstract}

\begin{IEEEkeywords}
	Conditional flow matching, generative artificial intelligence, site-specific beamforming
\end{IEEEkeywords}
\vspace{-0.3cm}
\section{Introduction}
Over the past decades, the rapid growth of mobile data traffic has continuously pushed cellular communication systems toward higher spectral efficiency and denser deployments. In this evolution, the spatial domain has emerged as a crucial resource, where beamforming enables directional transmission and reception through multi-antenna signal processing \cite{MIMO}. By concentrating signal energy toward intended user equipments (UEs) and mitigating multi-user interference, beamforming substantially enhances coverage, capacity, and energy efficiency. With large-scale antenna arrays and the migration toward millimeter-wave frequencies, beamforming has become a fundamental mechanism for ensuring reliable wireless communication in fifth-generation (5G) cellular networks \cite{mMIMO}. \\
\indent In practical 5G systems, beamforming is realized through standardized beam management and channel-dependent refinement mechanisms \cite{3GPPNR}. Due to highly directional transmission, reliable link establishment relies on initial beam sweeping and alignment, where the base station (BS) scans a predefined set of beams and the UE reports received signal quality, such as reference signal received power (RSRP), to identify suitable beam pairs \cite{IA}. Following this coarse alignment, beamforming can be further refined using up-link (UL) sounding reference signals (SRSs), enabling instantaneous channel state information (CSI) acquisition via UL–down-link (DL) reciprocity in time-division duplex systems \cite{Marzetta}, or through codebook-based feedback using precoding matrix indicators (PMIs) \cite{PMI}. While SRS-based beamforming offers high accuracy, its overhead scales rapidly with antenna dimensions and channel dynamics. In contrast, codebook-based approaches significantly reduce overhead but suffer from limited flexibility. These mechanisms reveal a fundamental trade-off between beamforming optimality and system overhead in current 5G frameworks. \\
\indent This trade-off originates from the universal design philosophy of conventional beamforming, which aims to operate reliably across diverse deployment scenarios. However, real-world propagation environments exhibit distinctive and persistent site-specific characteristics. Factors such as geometry, dominant scatterers, and blockage patterns impose strong structural constraints on channel realizations, effectively restricting them to a low-dimensional and highly structured subset. Exploiting this observation motivates site-specific beamforming (SSBF), which leverages prior knowledge of the propagation environment to guide beam design and alignment. \\
\indent SSBF departs from universal beamforming paradigms by explicitly incorporating site-dependent knowledge into the beamforming pipeline \cite{SSBFMAG1, SSBFMAG2}. By constraining channel realizations using long-term environmental information, SSBF reduces the reliance on exhaustive real-time CSI acquisition and enables high-performance beamforming with limited feedback. To maintain compatibility with existing 5G protocols, most SSBF schemes adopt a coarse-to-refinement framework similar to conventional systems. A probing codebook is first swept to obtain coarse measurements, followed by site-aware beam refinement based on the probing feedback. Within this framework, both probing and refinement stages can be designed in a site-specific manner. \\
\indent Existing SSBF studies can be broadly categorized by how site knowledge is exploited. One line of work focuses on site-specific probing codebook design, where probing beams are learned or optimized based on environmental characteristics \cite{SSBFCB1,SSBFCB2,SSBFCB3,SSBFCB4}. Another line of work moves beyond predefined codebooks and investigates joint probing and beam synthesis, enabling direct beam generation from continuous beam spaces using a small number of site-specific measurements \cite{SSBF1,E2ESSBF,CKM}. In these approaches, deep learning plays a key role due to the intrinsic complexity and strong structural dependence of site-aware beamforming. \\
\indent However, most existing SSBF frameworks rely on discriminative deep learning models, i.e., discriminative SSBF (DisSSBF), that map probing measurements directly to deterministic beamforming decisions \cite{SSBFCB1,SSBFCB2,SSBFCB3,SSBF1,E2ESSBF}. Common probing measurements such as RSRP provide only partial observations of the channel and inherently discard phase information and small-scale variations. As a result, a single probing observation may correspond to multiple distinct channel realizations. Discriminative models, which produce a single point estimate, are therefore unable to capture this intrinsic ambiguity, leading to limited robustness and generalization under model mismatch or unseen environments. \\
\indent In contrast to DisSSBF, generative models provide a principled framework for characterizing the conditional distribution of feasible channel realizations or beamformers given site-specific priors and limited measurements. By explicitly modeling uncertainty and capturing the structure of the underlying channel subspace, generative site-specific beamforming (GenSSBF) can produce multiple plausible beamforming solutions that are consistent with the same coarse observation. This capability allows GenSSBF to approach the performance of SRS-based beamforming using only lightweight refinement procedures, while requiring significantly lower signaling overhead than conventional beam alignment mechanisms. A comprehensive analysis of the advantages of GenSSBF over DisSSBF is provided in \cite{SSBFMAG2}. More recently, diffusion models \cite{DDPM} have been introduced as an effective realization of GenSSBF in \cite{SSBF2}, where substantial performance gains over DisSSBF approaches have been demonstrated through simulations. \\
\indent Despite their performance advantages, diffusion-based GenSSBF methods suffer from inherent limitations in practical deployment. In particular, diffusion models rely on long iterative sampling processes, which introduce non-negligible inference latency and computational overhead, rendering real-time beamforming challenging. To address this limitation, flow matching (FM) \cite{fm1} has recently emerged as an efficient generative modeling framework for learning complex continuous distributions. Instead of iterative denoising, FM directly learns a vector field that transports samples from a base distribution to the target distribution, enabling high-quality sample generation within only a few integration steps. When conditioned on RSRP vectors, conditional flow matching (CFM) can efficiently learn the posterior distribution of optimal beamformers and rapidly generate candidate beams for refinement. \\
\indent In addition to efficient beamformer generation, effective probing codebook design is a critical component of SSBF, since channel information is implicitly acquired through RSRP measurements of transmitted probing beams. However, commonly adopted discrete Fourier transform (DFT)-based codebooks are not optimized for information extraction in site-specific environments and may lead to inefficient or redundant probing, as will be shown later. Motivated by this observation, beyond introducing CFM as an efficient generative beam synthesis model, we further develop a information-theoretical approach for site-specific probing codebook design. \\
\indent The main contributions of this paper are summarized as follows:
\begin{itemize}
	\vspace{-0.3cm}
	\item We propose a novel GenSSBF framework for beam management. Within each channel coherence block, the proposed framework follows a three-stage coarse-to-refinement procedure consisting of channel probing, beam refinement, and beam locking. Specifically, the BS probes the channel using an optimized SIM codebook and collects RSRP measurements from UE feedback, based on which a CFM-based model generates a small set of candidate beams for refinement. The strongest beam is then selected for data transmission. The proposed framework is compatible with standard 5G beam management procedures and achieves near-optimal beamforming performance with significantly reduced probing overhead.
	\item We develop an information-theoretic SIM codebook design for low-overhead channel probing. By interpreting probing beams as measurement operators of the underlying site-specific channel, the SIM codebook is constructed by maximizing the mutual information between the RSRP feedback and the channel, so that UE feedback conveys more informative channel knowledge while the initial beam sweeping overhead is substantially reduced.
	\item We propose a CFM-based beam generator that treats the RSRP feedback as a conditional prior and explicitly accounts for the uncertainty in beam generation. By inferring the learned generative model and sampling from the corresponding posterior distribution, the BS produces a small set of UE-specific candidate beams to resolve the remaining uncertainty. Feature-wise linear modulation (FiLM) is adopted for condition embedding, enabling the generator to effectively interpret RSRP measurements and guide the generative process, thereby improving robustness under partial and noisy measurements.
	\item Extensive simulation results validate the effectiveness of the proposed framework. In particular, the designed probing codebook is shown to be more informative and robust than conventional DFT-based codebook. The FiLM-aided CFM model achieves near-optimal performance with short inference time. Eventually, the proposed framework achieves performance nearly identical to maximum ratio transmission while requiring only eight probing beams and eight candidate beams.
\end{itemize}
\vspace{-0.3cm}
The remainder of this paper is structured as follows. Section~\ref{sec2} presents the proposed GenSSBF framework, develops the corresponding system model, and formulates the associated optimization problem. Section~\ref{sec3} details the design principles and algorithm for the SIM codebook. Section~\ref{sec4} introduces the CFM-based beam generator. Numerical results that evaluate the performance of the proposed GenSSBF framework under various system configurations are provided in Section~\ref{sec5}. Finally, Section~\ref{sec6} concludes the paper. \\
\indent \emph{Notations:} Scalars, vectors/matrices, and sets are denoted by regular, lowercase/uppercase boldface, and calligraphic letters, respectively. the real and complex number fields are denoted by $\mathbb{R}$ and $\mathbb{C}$, respectively. The inverse, transpose, and conjugate transpose are represented by $(\cdot)^{-1}$, $(\cdot)^T$, and $(\cdot)^H$, respectively. The absolute value and Euclidean norm are indicated by $|\cdot|$ and $\|\cdot\|$, respectively. The expectation operator is denoted by $\mathbb{E}\left[ {\cdot}\right]$. An N-dimensional all-one column vector is denoted by $\mathbf{1}_N$. An identity matrix of size $N \times N$ is denoted by $\mathbf{I}_N$. $[x]_+$ denotes the positive part of $x$, i.e., $\max (x,0)$.
\vspace{-0.3cm}
\section{System Model and Problem Formulation} \label{sec2}
In this section, we present the proposed GenSSBF framework. We first describe the system setup and the overall beam management procedure, and then establish the corresponding mathematical models.
\begin{figure*}[tb]
	\centering
	\includegraphics[scale=0.42]{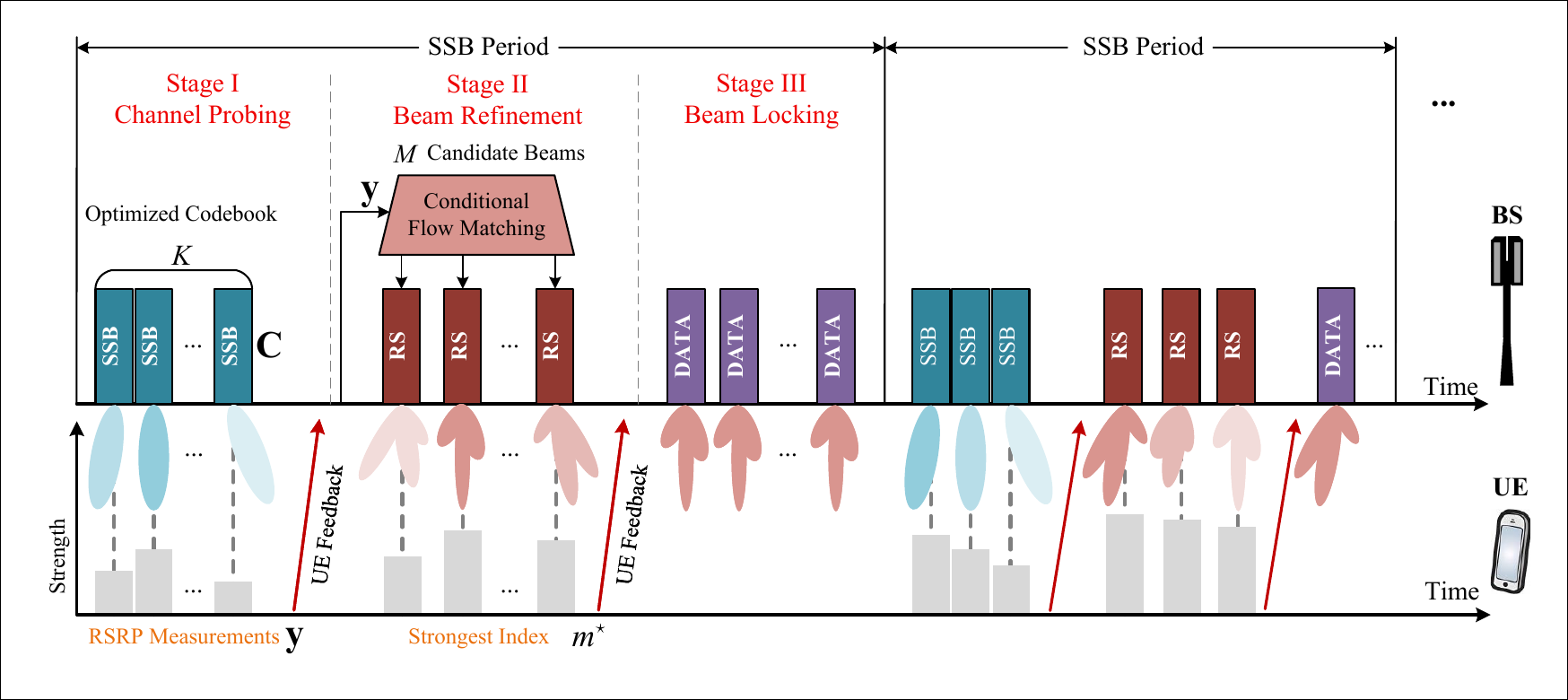}
	\caption{Illustration for the proposed GenSSBF framework}
	\label{protocol}
	\vspace{-0.5cm}
\end{figure*} 
\vspace{-0.5cm}
\subsection{Beam Management Procedure}
We consider a BS equipped with $N_t$ antennas communicating with a single-antenna UE, which is randomly located within a predefined two-dimensional (2D) service area.
The BS is assumed to employ a single radio frequency (RF) chain and performs analog beamforming only. 
The overall beam management procedure between the BS and the UE is illustrated in Fig .\ref{protocol}. We assume a block-fading channel model, where the channel remains approximately constant within each beam management interval, which is referred to as an SSB period in Fig.~\ref{protocol}. Within each SSB period, the beam management procedure consists of three stages: channel probing, beam refinement, and beam locking. 
\begin{itemize}
	\item \textbf{Channel Probing:} The BS scans a predefined codebook of beams using SSB transmissions. Based on the received signals, the UE performs initial access (IA) and synchronization with the BS \cite{3GPP_R1-166088}, and subsequently reports the measured RSRP for each probing beam back to the BS. Unlike conventional 5G new radio (NR) systems, which typically utilize a DFT-based codebook, the probing codebook in the proposed framework is optimized using site-specific data, as detailed in Section~\ref{sec3}.
	\item \textbf{Beam Refinement:} The BS treats the received RSRP feedback as conditional information and applies the CFM model, presented in Section~\ref{sec4}, to generate a small set of candidate beams. These candidate beams are transmitted to the UE using reference signals (RSs). The UE then measures the received power of each beam and feeds back the index of the strongest one.
	\item \textbf{Beam Locking:} The beam selected in the second stage is used for data transmission.
\end{itemize} 
This procedure is repeated in every channel coherence block. 
\vspace{-0.3cm}
\subsection{Signal Model}
Let $\mathbf{s}_{\rm SSB} \in \mathbb{C}^{L_s \times 1}$ denote the transmitted SSB signal, where $L_s$ is the signal length. The SSB signal is assumed to satisfy $\mathbb{E}\left[{{\bf{s}}_{{\rm{SSB}}}} {{\bf{s}}_{{\rm{SSB}}}^H} \right] = {P_t}\mathbf{I}_{L_s}$, where $P_t$ denotes the transmit power. The received SSB signal at the UE can be expressed as
\begin{equation}
	{{\bf{r}}_{{\rm{SSB}}}} = \mathbf{h}^H\left( \mathcal{E}_s \right)\mathbf{w}{{\bf{s}}_{{\rm{SSB}}}} + {\bf{n}}.
\end{equation}
Here, $\mathbf{h} \left( \mathcal{E}_s \right) \in \mathbb{C}^{N_t \times 1}$ denotes the DL channel vector between the BS and the UE, which is determined by the site-specific propagation environment $\mathcal{E}_s$. Vector $\mathbf{w} \in \mathbb{C}^{{N_t}\times 1}$ represents the unit-modulus analog beamformer adopted at the BS. The noise vector $\mathbf{n} \in \mathbb{C}^{{L_s}\times 1}$ is modeled as circularly symmetric complex Gaussian noise with zero mean and covariance $\sigma_n^2 \mathbf{I}_{L_s}$, i.e., $\mathbf{n} \sim \mathcal{CN}(\mathbf{0}_{L_s}, \sigma_n^2 \mathbf{I}_{L_s})$, and is assumed to be temporally independent of $\mathbf{s}_{\rm SSB}$.  \\
\indent Within each channel coherence block, the DL channel between the BS and the UE is modeled as a superposition of $L$ propagation paths \cite{MIMO}
\begin{equation} \label{h}
	\mathbf{h} = \sum\limits_{l = 1}^{L} {{\alpha _l}{{\bf{a}}}\left( {{\varphi _l}} \right)}.
\end{equation} 
where $L$ denotes the number of resolvable propagation paths determined by the site-specific propagation environment $\mathcal{E}_s$ and $\varphi_l$ and $\alpha_l = \sqrt{\beta_l}\exp\!\left( j\psi_l \right)$ denote the angle of departure (AoD) and complex gain associated with the $l$-th path, respectively. More particularly, $\psi_l$ represents the phase shift induced by propagation and reflection effects and is commonly modeled as being uniformly distributed over $[-\pi,\pi)$. $\beta_l$ is the large-scale power attenuation modeled as \cite{MIMO}
\begin{equation}
	\beta_l = {C_0}{d_l^{ - {\alpha _{{\rm{PL}}}}}}S_l,
\end{equation}
where $C_0$ is a reference gain constant, $d_l$ is the propagation distance of the $l$-th path, and $\alpha_{\rm PL}$ denotes the path loss exponent, which typically takes values between $2$ and $4$ in cellular systems \cite{MIMO}. The term $S_l$ represents shadow fading caused by obstacles such as buildings, vegetation, or vehicles, and is commonly modeled as a log-normal random variable, i.e., $S_l \sim \mathcal{LN}(0,\sigma_{\text{sh},l}^2)$. As a consequence, the overall power attenuation also follows a log-normal distribution given by ${\beta _l} \sim \mathcal{LN}\left( {{C_0}{d_l^{ - {\alpha _{{\rm{PL}}}}}}, \sigma_{\text{sh},l}^2} \right)$. Finally, $\mathbf{a}(\varphi) \in \mathbb{C}^{{N_t}\times 1}$ denotes the transmit steering vector of the BS corresponding to AoD $\varphi$.
\vspace{-0.45cm}
\subsection{RSRP Model}
\vspace{-0.1cm}
The RSRP associated with a probing beam is obtained from the corresponding RSs and is used to quantify the received signal strength along that beam direction. Taking the SSB signal as an example, the RSRP is estimated by averaging the received power over a finite number of SSB signal elements, which can be written as
\begin{align}
	&\hat{P}_\mathbf{w} \triangleq \frac{1}{{{L_s}}}\sum\limits_{t = 1}^{{L_s}} {{{\left| {{r_{{\rm{SSB}},t}}} \right|}^2}}  \notag \\
	&= \frac{1}{{{L_s}}}\left( {{{\left| {\mathbf{h}^H\mathbf{w}} \right|}^2}\left\| {{{\bf{s}}_{{\rm{SSB}}}}} \right\|^2 + \left\| {\bf{n}} \right\|^2 + 2{\mathop{\rm Re}\nolimits} \left\{ {\mathbf{h}^H\mathbf{w}{\bf{s}}_{{\rm{SSB}}}^H{\bf{n}}} \right\}} \right),
\end{align}
where $r_{{\rm SSB},t}$ denotes the $t$-th element of vector ${{\bf{r}}_{{\rm{SSB}}}}$, i.e., the $t$-th received symbol on the RS resource elements of the SSB. \\
\indent For a given channel realization $\mathbf{h}$ and beamformer $\mathbf{w}$, the RSRP estimate $\hat{P}_\mathbf{w}$ can be approximated as a Gaussian random variable when $L_s$ is sufficiently large, according to the central limit theorem \cite{HDP}. Taking expectation with respect to the transmitted SSB signal and the thermal noise, the mean and variance of $\hat{P}_\mathbf{w}$ are given by
\begin{subequations}
	\begin{align}
		{\mu _p}  &= {\mathbb{E}_{{\bf{s_{\rm{SSB}}},\mathbf{n}}}}\left[ {\hat{P}_\mathbf{w}} \right] = {P_t}{\left| {\mathbf{h}^H\mathbf{w}} \right|^2} + {\sigma_n ^2}, \\
		\sigma _p^2 &= \frac{1}{{{L_s^2}}}\left( {\sigma _n^4 + 2\sigma _n^2{P_t}{\left| {\mathbf{h}^H\mathbf{w}} \right|^2}} \right).
	\end{align}
\end{subequations}
Accordingly, the RSRP estimate can be expressed as $\hat{P}_\mathbf{w}=\mu _p+n_p$, where $n_p\sim\mathcal{N}(0,\sigma_p^2)$ captures the uncertainty introduced by finite-length time averaging. In addition to thermal noise, the RSRP estimate is also affected by large-scale shadowing embedded in the channel $\mathbf{h}$. To facilitate tractable analysis, we assume that all propagation paths share a common large-scale shadowing coefficient, since the dominant blockage and attenuation are mainly determined by the same BS–UE geometry and surrounding obstacles within a local area. Specifically, we assume $S_l=S$ and $\sigma_{\text{sh},l}^2=\sigma_\text{sh}^2$ for all $l$. Under this assumption, the effective channel gain can be expressed as
\begin{equation}
	{\left| {{{\bf{h}}^H}{\bf{w}}} \right|^2} = S\left| {\sum\limits_{l = 1}^L {\sqrt {{C_0}d_l^{ - {\alpha _{{\rm{PL}}}}}} {e^{ - j{\psi _l}}}{{\bf{a}}^H}\left( {{\varphi _l}} \right){\bf{w}}} } \right| = SG_{\mathbf{w}},
\end{equation}
where $G_{\mathbf{w}}$ denotes the deterministic array gain associated with beamformer $\mathbf{w}$, given the site geometry and beamforming direction. \\
\indent As a result, the RSRP estimate consists of a log-normal shadowing component $S G_w + \sigma_n^2$ and an additive measurement noise component $n_p$. It is therefore convenient to analyze the RSRP in the decibel (dB) domain. Define
\begin{align} \label{logrsrp}
	y_\mathbf{w} &= 10{\log _{10}}\hat{P}_\mathbf{w} = 10{\log _{10}}\left( {{P_t}S{G_{\bf{w}}} + \sigma _n^2 + {n_p}} \right) \notag \\
	&\overset{(a)}{\approx} 10{\log _{10}}{P_t} + 10{\log _{10}}{G_{\bf{w}}} + 10{\log _{10}}S + \frac{10}{\ln10} \frac{{\sigma _n^2+{n_p}}}{{{{P_t}S{G_{\bf{w}}}}}} \notag \\
	&= {P_t^{{\rm{dB}}}} + G_{\bf{w}}^{{\rm{dB}}} + {n_y} = y_\mathbf{w}^0 + n_y,
\end{align}
where ${P_t^{{\rm{dB}}}}$ and $G_{\bf{w}}^{{\rm{dB}}}$ denote the dB-formed transmit power and array gain, respectively. Hence, $y_\mathbf{w}^0={P_t^{{\rm{dB}}}} + G_{\bf{w}}^{{\rm{dB}}}$ denotes the true RSRP value in the dB domain. In step (a), the approximation $\log(a+b)\approx\log(a)+b/(a\ln10)$ is applied, which is valid when $|b/a|\ll1$, corresponding to moderate or high signal-to-noise ratio (SNR) regimes. Under this approximation, the effective noise term $n_y$ approximately follows a Gaussian distribution with mean and variance
\begin{subequations}
	\begin{align}
		\mu_y &\approx \frac{10}{\ln10}{\frac{{\sigma _n^2}}{{{P_t}{G_{\bf{w}}}}}}, \\
		\sigma_y^2 &\approx \sigma _\text{sh}^2 + (\frac{10}{\ln10})^2\frac{\sigma _n^4 + 2\sigma _n^2{P_t}{G_{\bf{w}}}}{{{P_t^2}{G_{\bf{w}}^2}L_s^2}}. \label{8b}
	\end{align}
\end{subequations}
For analytical simplicity, the impact of shadowing on the mean bias and variance terms is neglected, as its contribution is assumed to be minor relative to the overall received signal strength in the RSRP measurement. Consequently, the measured RSRP in the dB domain can be approximated as 
\begin{equation}
	y_\mathbf{w} \sim \mathcal{N}(y_\mathbf{w}^0+\mu_y,\sigma_y^2).
\end{equation} 
From the expression of $\sigma_y^2$, it can be observed that the uncertainty of the RSRP measurement is mainly governed by the shadowing variance $\sigma_\text{sh}^2$ and the thermal noise power $\sigma_n^2$. Moreover, for beamforming directions with larger array gains $G_\mathbf{w}$, both $\mu_y$ and $\sigma_y^2$ become smaller, indicating that the measured RSRP is closer to its ground-truth value. This is because noise perturbations in the logarithmic domain are relative to the received signal strength, and stronger received signals lead to smaller relative distortions in the dB scale. \\
\indent Suppose that the designed codebook contains $K$ probing beams, the codebook can be represented by a matrix $\mathbf{C}=[\mathbf{c}_1,\cdots,\mathbf{c}_K] \in \mathcal{C}^{N_t \times K}$, where $\mathcal{C}$ denotes the complex unit circle. Due to the unit-modulus constraint imposed by analog beamforming architectures, $\mathbf{C}$ lies on a Riemannian manifold. After sweeping all $K$ beams in the codebook, the UE feeds back $K$ measured RSRP values to the BS in dB scale. Each element ${y_{\mathbf{c}_k}}$ of the RSRP vector ${\bf{y}} = {\left[ {{y_{\mathbf{c}_1}}, \cdots ,{y_{\mathbf{c}_K}}} \right]^T} \in \mathbb{R}^K$ corresponds to the measurement obtained using the $k$-th probing beam. Following the model developed in Eq.~(\ref{logrsrp},) the RSRP vector ${\bf{y}}$ is modeled as a Gaussian random vector given by
\begin{equation} \label{measurement}
	{\bf{y}} = {\bf{y}}_0 + {\bf{n}}_y = \mathcal{M}_\mathbf{C}(\mathbf{h}(\mathcal{E}_s)) + {\bf{n}}_y, 
\end{equation}
where ${\bf{y}}_0$ denotes the noise-free RSRP vector, while ${\bf{n}}_y$ represents perturbations caused by thermal noise and shadowing effects. The $k$-th element of the mean vector $\boldsymbol{\mu}_y$ of ${\bf{y}}$ is given by ${P_t^{{\rm{dB}}}} + G_{\mathbf{c}_k}^{{\rm{dB}}} + \mu_{y,\mathbf{c}_k}$, and the covariance matrix is $\mathbf{\Sigma}_y=\text{diag}\left\lbrace \sigma_{y,\mathbf{c}_1}^2, \cdots, \sigma_{y,\mathbf{c}_K}^2 \right\rbrace$. From an alternative perspective, the probing codebook can be interpreted as a measurement operator applied to the underlying channel. Under this interpretation, the RSRP is viewed as a vector generated by the nonlinear measurement mapping $\mathcal{M}_\mathbf{C}$, which is fully determined by the probing codebook $\mathbf{C}$.
\vspace{-0.3cm}
\subsection{Problem Formulation}
Based on the developed mathematical models, the beam management problem is formulated as follows:
\begin{subequations}\label{p1}
	\begin{align} 
		\left( {\rm{P1}}\right)  \quad \quad  \mathop {\max }\limits_{{\bf{C}},f} \quad & u\left( {{\bf{h}}(\mathcal{E}_s),f\left( {{\bf{y}}} \right)} \right)  \\
		\label{1a} \mathrm{s.t.} \ \  & (\ref{measurement}), \mathbf{C} \in \mathcal{S}_\mathbf{C}, f\left( {{\bf{y}}} \right) \in \mathcal{S}_\mathbf{w}.
	\end{align}
\end{subequations}
In (P1), $f$ denotes a generator that maps the RSRP measurement vector ${\bf y}$ to a beamformer. The probing codebook $\mathbf C$ and the mapping $f$ are jointly optimized to maximize a beamforming utility function $u({\bf h},{\bf w}): \mathbb{C}^{N_t \times 1} \times \mathbb{C}^{N_t \times 1} \rightarrow \mathbb{R}$, which characterizes performance metrics such as received SNR. The feasible sets $\mathcal S_{\mathbf C}$ and $\mathcal S_{\mathbf w}$ capture practical physical constraints on the probing codebook and the generated beamformer, including power and unit-modulus constraints.  \\
\indent Due to the nonlinear measurement model, the lack of instantaneous channel knowledge, and the functional optimization over $f$, problem (P1) is highly non-convex and intractable to solve directly. To overcome this difficulty, we decompose (P1) into two sub-problems, i.e., probing codebook design and beam generation, which correspond to the channel probing and beam refinement stages of the proposed GenSSBF framework.

\section{Site-Information-Maximizing Codebook} \label{sec3}
This section focus on the probing codebook design sub-problem, which is formulated as
\begin{subequations}\label{p2}
	\begin{align} 
		\left( {\rm{P2}}\right) \quad \quad  \  \mathop {\max }\limits_{{\bf{C}}} \quad & u\left( {{\bf{h}}(\mathcal{E}_s),f\left( {{\bf{y}}} \right)} \right)  \\
		\label{2a} \mathrm{s.t.} \ \  & (\ref{measurement}), \mathbf{C} \in \mathcal{S}_\mathbf{C} .
	\end{align}
\end{subequations}
It is worth noting that the probing codebook $\mathbf C$ appears in the objective of (P2) only implicitly. Specifically, any beamforming decision is made based on the probing measurements, i.e., the RSRP vector $\mathbf{y}$ obtained from the probing codebook $\mathbf C$. As a result, the dependence of the beamforming utility on the underlying channel is entirely mediated through the measurement vector $\mathbf{y}$, and hence through the choice of $\mathbf C$. In this sense, the performance of problem (P2) is fundamentally constrained by how much channel information is preserved in the probing measurements.

\vspace{-0.5cm}
\subsection{Information-Theoretical Design Principles}
\vspace{-0.1cm}
\subsubsection{Mutual Information Maximization}
The above observation naturally motivates an information-theoretic design principle for the probing codebook. This principle is further supported by the fact that, given perfect CSI, the optimal analog beamforming vector is typically a deterministic function of the channel. For example, in a single-user, single-stream, narrow-band setting, the analog beamformer that maximizes the received SNR is given by the maximum-ratio transmission (MRT) solution, i.e., $\mathbf{w}_{\text{MRT}}^\star=\exp(j\angle \mathbf{h})$. \\
\indent Consequently, we adopt mutual information as a principled metric to quantify the informativeness of the probing measurements. Specifically, the mutual information between the RSRP vector $\mathbf{y}$ and the channel $\mathbf{h}$, denoted by, $I(\mathbf{y};\mathbf{h})$, characterizes how much channel information is retained in the probing stage. Accordingly, the probing codebook should be designed to maximize this mutual information, i.e., $\mathbf{C} = \arg \mathop {\max }\limits_\mathbf{C} I\left( {{\bf{y}};{\bf{h}}} \right)$. By definition, the mutual information is given by
\begin{equation}
	I\left( {{\bf{y}};{\bf{h}}} \right) = h\left( {\bf{y}} \right) - h\left( {{\bf{y}}\left| {\bf{h}} \right.} \right),
\end{equation}
where $h(\cdot)$ denotes the differential entropy. According to the measurement model in (\ref{measurement}), the conditional distribution $p(\mathbf y|\mathbf h)$ is Gaussian with covariance matrix $\mathbf{\Sigma}_y(\mathbf{h},\mathbf{C})$. Hence,
\begin{align}
	h(\mathbf y|\mathbf h)
	&= \frac{1}{2}\mathbb E_{\mathbf h}\!\left[\log\!\Big((2\pi e)^K\det \mathbf{\Sigma}_y(\mathbf{h},\mathbf{C})\Big)\right] \\
	&= \frac{1}{2}\mathbb E_{\mathbf h}\!\left[\sum\limits_{k = 1}^K {\frac{1}{2}\log \left( {2\pi e{\sigma_{y, \mathbf{c}_k}^2}}(\mathbf{h}) \right)}\right].
\end{align}
Since the marginal distribution of $\mathbf y$ is generally unknown due to the nonlinear
measurement mapping, we upper bound $h(\mathbf y)$ using the maximum-entropy property of the Gaussian distribution:
\begin{equation}
	h(\mathbf y)\le \frac12\log\!\Big((2\pi e)^K\det \mathbf{R}_y\Big),
\end{equation}
where $\mathbf{R}_y=\mathbb E[\mathbf y\mathbf y^T]-\mathbb E[\mathbf y]\mathbb E[\mathbf y]^T$.
Therefore,
\begin{align} \label{mibound}
	I\left( {{\bf{y}};{\bf{h}}} \right) &= h\left( {\bf{y}} \right) - \frac{1}{2}\mathbb E_{\mathbf h}\!\left[\sum\limits_{k = 1}^K {\frac{1}{2}\log \left( {2\pi e{\sigma_{y, \mathbf{c}_k}^2}}(\mathbf{h}) \right)}\right] \notag \\
	&\le \frac{1}{2}\log \det {\mathbf{\mathbf{R}} _y} - \frac{1}{2}\mathbb E_{\mathbf h}\!\left[\sum\limits_{k = 1}^K {\frac{1}{2}\log {\sigma_{y, \mathbf{c}_k}^2}(\mathbf{h}) }\right] \notag \\
	&\overset{(b)}{\approx} \frac{1}{2}\log \det {{\bf{R}}_y} - I_c.
\end{align}
Step (b) assumes that $\frac{1}{2}\mathbb E_{\mathbf h}\!\left[\sum\nolimits_{k = 1}^K {\frac{1}{2}\log {\sigma_{y, \mathbf{c}_k}^2}(\mathbf{h}) }\right]$ is approximated by a constant $I_c$. The reason is that the uncertainty of $\mathbf y$ given $\mathbf h$ is dominated by the site-specific propagation environment. The term $\frac{1}{2}\mathbb E_{\mathbf h}\!\left[\sum\nolimits_{k = 1}^K {\frac{1}{2}\log {\sigma_{y, \mathbf{c}_k}^2}(\mathbf{h}) }\right]$ thus varies weakly with the codebook. Maximizing the mutual information between $\mathbf{y}$ and $\mathbf{h}$ then reduces to maximizing $\log \det {{\bf{R}}_y}$. \\
\indent This objective also admits an intuitive geometric interpretation. The covariance matrix $\mathbf{R}_y$, defines an ellipsoidal confidence region in the measurement space ${\mathcal{E}_y}\left( {{c_e}} \right) = \left\{ {{\bf{y}}:{{\left( {{\bf{y}} - {\boldsymbol{\mu} _y}} \right)}^T}{\bf{R}}_y^{ - 1}\left( {{\bf{y}} - {\boldsymbol{\mu} _y}} \right) \le {c_e}} \right\}$, whose volume is proportional to $\det {{\bf{R}}_y}$. Maximizing this determinant enlarges the volume of the measurement ellipsoid, encouraging the probing beams to generate diverse and complementary RSRP measurements. \\
\indent Further insight can be obtained from an estimation-theoretic perspective by viewing the probing stage as a channel estimation problem, where the channel vector $\mathbf{h}$ is inferred from the RSRP measurements $\mathbf{y}$ generated by the probing codebook. From this viewpoint, the design of the probing codebook directly determines the measurement operator and hence the quality of channel inference. By locally approximating the nonlinear measurement mapping around a nominal channel realization $\mathbf{h}_0$, we have
\begin{equation}
	\mathbf{y} \approx \mathcal{M}_\mathbf{C}(\mathbf{h}_0)+{\bf{J}}_\mathbf{C}\left( {{{\bf{h}}_0}} \right)\left( {{\bf{h}} - {{\bf{h}}_0}} \right)+\mathbf{n}_y,	
\end{equation}
where ${\bf{J}}_\mathbf{C} = \frac{{\partial {\mathcal{M}_C}}}{{\partial {\bf{h}}}} \in {\mathbb{R}^{K \times 2{N_t}}}$ denotes the real-valued augmented Jacobian matrix, which is fully determined by the probing codebook. Under this local linear model, the Fisher information matrix (FIM) for estimating $\mathbf{h}$ is given by $\mathbf{F}(\mathbf{h}_0)= \mathbf{J}_\mathbf{C}^T\mathbf{\Sigma}_y^{-1}\mathbf{J}_\mathbf{C}$. The covariance matrix of $\mathbf{y}$ can thus be expressed as
\begin{equation}
	{{\bf{R}}_y} = {\bf{J}}_\mathbf{C}{{\bf{R}}_h}{{\bf{J}}_\mathbf{C}^T} + {\mathbf{\Sigma} _y},
\end{equation}
where ${{\bf{R}}_h}$ denotes the channel covariance matrix and is independent of the probing codebook. Applying the matrix determinant lemma yields
\begin{equation}
	\log \det {{\bf{R}}_y} = \log \det {\mathbf{\Sigma} _y} + \log \det \left( {{{\bf{I}}_{2N_t}} + {{\bf{R}}_h}{\bf{F}}} \right).
\end{equation} 
Since $\mathbf{\Sigma}_y$ and ${{\bf{R}}_h}$ are fixed, maximizing $\det {{\bf{R}}_y}$ is equivalent to maximizing $\det {\bf{F}}$, which corresponds to minimizing the volume of the Cramér–Rao lower bound (CRLB) ellipsoid, thereby enhancing robustness to Gaussian perturbations. Such criterion is also known as D-optimality in experiment design \cite{Doptimal}. 

\subsubsection{Orthogonality Construction}
Maximizing $\log \det {{\bf{R}}_y}$ is theoretically optimal from both information-theoretic and estimation-theoretic perspectives. However, this objective alone does not explicitly regulate certain structural properties that are desirable for a valid probing codebook. In particular, without additional constraints, the optimization may converge to stationary points where the increase in $\log \det {{\bf{R}}_y}$ is dominated by a few highly informative beams, while the remaining beams become strongly correlated and contribute marginally to the overall information gain. Such beam correlations lead to redundant measurements and may result in locally ill-conditioned information matrices, which is undesirable for robust probing and subsequent inference. \\
\indent To mitigate this issue, the probing beams should be designed to exhibit low mutual correlation, such that each beam contributes complementary information. To this end, we introduce an explicit orthogonality constraint to regularize the optimization of the probing codebook $\mathbf{C}$. Specifically, the correlation among different beams, corresponding to the columns of $\mathbf{C}$, is characterized by the Gram matrix $\mathbf{G}_\mathbf{C} \in \mathbb{C}^{K \times K}$, whose $(i,j)$-th element is defined as ${\textstyle{{{\bf{c}}_i^H{{\bf{c}}_j}} \over {\left\| {{{\bf{c}}_i}} \right\|\left\| {{{\bf{c}}_j}} \right\|}}}$. The off-diagonal elements of $\mathbf{G}_\mathbf{C}$ therefore quantify the inter-beam correlations within the probing codebook. For an orthogonal set of beams, the Gram matrix reduces to the identity matrix. Accordingly, we impose the following orthogonality condition:
\begin{equation} \label{ortho}
	\mathbf{G}_\mathbf{C} = \mathbf{I}_K.
\end{equation}
By encouraging the probing codebook to be orthogonal, the resulting beams are driven to generate statistically independent and non-redundant measurements. This constraint complements the log-determinant objective by preventing correlation collapse among beams and promoting well-conditioned information matrices, thereby enhancing the robustness and effectiveness of the probing stage.

\subsubsection{Coverage Guarantee}
Beyond maximizing the overall information content of the probing measurements, another critical concern is the spatial coverage of the designed probing codebook. In particular, it is necessary to explicitly prevent degenerate probing patterns in which all beams produce weak responses at certain spatial locations. In such scenarios, the resulting RSRP measurements are dominated by noise, leading to unreliable measurements and a poorly conditioned FIM, even if the codebook performs well on average. \\
\indent To address this issue, we introduce a coverage-guaranteeing constraint that ensures each spatial sample is sufficiently excited by at least one probing beam. This constraint enforces a minimum signal strength level across the region of interest, thereby preventing coverage holes in the probing stage. Specifically, the coverage requirement is expressed as
\begin{equation} \label{coverage}
	\mathop {\max }\limits_k {y_k} \ge {y_{{\rm{th}}}}, \text{for all users},
\end{equation}
where $y_{th}$ denotes a predefined threshold that guarantees adequate received signal strength. By enforcing this constraint, the probing codebook is encouraged to provide reliable measurements for all spatial locations under consideration. This coverage guarantee complements the mutual information maximization and orthogonality constraints by safeguarding against noise-dominated observations and ensuring that the probing stage yields well-conditioned and informative measurements throughout the entire service area.
\vspace{-0.2cm}
\subsection{Probing Codebook Optimization}
Based on the above information-theoretical design principles, we formulate the following optimization problem for probing codebook design:
\begin{subequations} \label{1}
	\begin{align} 
		\left( {\rm{P3}}\right)  \  \mathop {\max }\limits_{{\bf{C}}} \quad &\log \det \mathbf{R}_y  \\
		\label{3a} \mathrm{s.t.} \ \  & (\ref{ortho}), (\ref{coverage}), \left| {{{\bf{C}}_{i,j}}} \right| = 1, \forall i,j.
	\end{align}
\end{subequations}
Solving (P3) directly is also intractable. First, $\log \det \mathbf{R}_y$ is concave for the positive semi-definite $\mathbf{R}_y$, but $\mathbf{C}$ is implicitly involved in the objective function. Additionally, the expression for $\mathbf{y}$ in (\ref{measurement}) is highly non-linear w.r.t. $\mathbf{h}$ and involves several approximations, making it difficult to obtain a closed-form expression for $\mathbf{R}_y$. Therefore, we can only estimate $\hat{\mathbf{R}}_y$ using the available sample measurements. Specifically, consider a set of UE locations $\mathcal{U}$ in the site, with corresponding CSI $\mathcal{H}_\mathcal{U}$ is obtained through ray-tracing simulation or practical collection. The dataset $\mathcal{H}_\mathcal{U}$ contains the site-specific environment information, which the BS serves. Hence, we can maximize the estimated $\hat{\mathbf{R}}_y$ obtained from this sample dataset. 

\subsubsection{Problem Reformulation}
The orthogonality and coverage constraints, (\ref{ortho}) and (\ref{coverage}), are also non-convex w.r.t. $\mathbf{C}$, making them difficult to handle directly. To address these challenges, we apply the penalty method to incorporate these constraints into the objective function. Specifically, let $\hat{\mathbf{C}}$ denote the column-normalized version of $\mathbf{C}$. The penalty for the orthogonality constraint (\ref{ortho}) is defined as
\begin{equation} \label{Lorth}
	{L_{{\rm{orth}}}} \triangleq \left\| {{{{\bf{\widehat C}}}^H}{\bf{\widehat C}} - {{\bf{I}}_K}} \right\|^2,
\end{equation}
which penalizes the off-diagonal entries of the Gram matrix of the probing codebook. If ${L_{{\rm{orth}}}}$ becomes zero, $\mathbf{G}_\mathbf{C}$ is equal to the identity matrix, ensuring orthogonality. \\
\indent In addition, the maximum operator in (\ref{coverage}) is non-differentiable, making it unsuitable for gradient-based methods. To overcome this, we introduce the log-sum-exp operator as a smooth approximation to the maximum beam response. The expression is given by
\begin{equation}
	 \tilde{y}_\text{max} (\mathbf{y}(\mathbf{C})) = \frac{1}{\beta} \log \sum\limits_{k = 1}^K {{e^{\beta {y_k}}}},
\end{equation}
where $\beta \in \mathbb{R}$ is a tuning factor. Consequently, the corresponding coverage constraint is given by
\begin{equation}
	{L_{{\mathop{\rm cov}} }} \triangleq [{y_{{\rm{th}}}} - \tilde{y}_\text{max} (\mathbf{y}(\mathbf{C}))]_+.
\end{equation}
\begin{lemma} \normalfont \label{lemma1}
	Suppose that (i) $\mathcal{U}$ contains an infinite number of i.i.d. sampled UE locations, i.e., $\left| \mathcal{U} \right| \to \infty$, and satisfies $\left\| {\bf{y}} \right\| < \infty$ for each UE; (ii) (P3) admits a KKT point within the feasible region; (iii) $\beta \to \infty$. Then, the following problem (P4) is asymptotically equivalent to (P3) as $\lambda_1, \lambda_2 \to \infty$.
	\begin{equation} \label{P4}
		\left( {\rm{P4}}\right)  \quad  \mathop {\max }\limits_{\mathbf{C} \in \mathcal{C}^{N_t \times K}} \quad \log \det {\bf{\widehat R}}_y^{\left( N \right)} - {\lambda _1}{L_{{\rm{orth}}}} - {\lambda _2}{L_{{\mathop{\rm cov}} }}.
	\end{equation}
\end{lemma}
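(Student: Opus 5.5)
The plan is to establish the equivalence in three successive approximation steps, one per hypothesis, and then chain them. Define $\hat{\mathbf{R}}_y^{(N)}$ as the sample covariance built from $N=|\mathcal{U}|$ RSRP vectors. First, use hypothesis (i) to replace the true objective $\log\det\mathbf{R}_y$ of (P3) by its empirical counterpart. Since the UE locations are i.i.d. and $\|\mathbf{y}\|<\infty$, the strong law of large numbers gives $\hat{\mathbf{R}}_y^{(N)}\to\mathbf{R}_y$ almost surely for each fixed $\mathbf{C}$. To pass from pointwise convergence to convergence of maximizers, I would upgrade this to uniform convergence over $\mathbf{C}\in\mathcal{C}^{N_t\times K}$. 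This uses two facts: the torus $\mathcal{C}^{N_t\times K}$ is compact, and $\mathbf{y}(\mathbf{C})$ is continuous in $\mathbf{C}$ (hence equicontinuous given bounded $\mathbf{y}$). A uniform law of large numbers then applies. Because $\log\det$ is continuous on the positive definite cone, and $\mathbf{R}_y\succeq\mathbf{\Sigma}_y\succ\mathbf{0}$ keeps us away from the boundary, the empirical objective converges uniformly to $\log\det\mathbf{R}_y$.

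Second, use hypothesis (iii) to handle the coverage term. The log-sum-exp operator satisfies
\begin{equation*}
\max_k y_k\le \tilde{y}_{\max}\le \max_k y_k+\tfrac{\log K}{\beta},
\end{equation*}
so $\tilde{y}_{\max}\to\max_k y_k$ uniformly as $\beta\to\infty$. It follows that $L_{\rm cov}=0$ exactly characterizes feasibility of (\ref{coverage}) in the limit. Likewise, $L_{\rm orth}=0$ if and only if (\ref{ortho}) holds, because the unit-modulus columns are automatically nonzero, so the normalization $\hat{\mathbf{C}}$ is well defined. After these two steps, (P3) reads as maximizing a continuous function on a compact set subject to $\{L_{\rm orth}=0,\ L_{\rm cov}=0\}$, with both penalties nonnegative and continuous.

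Third, invoke the classical exterior penalty convergence theorem. Take a sequence $\lambda_1^{(t)},\lambda_2^{(t)}\to\infty$ and let $\mathbf{C}^{(t)}$ maximize (P4). Compactness of the unit-modulus torus supplies a convergent subsequence. Comparing the value of (P4) at $\mathbf{C}^{(t)}$ with its value at any feasible point of (P3) shows two things. First, $\lambda_1 L_{\rm orth}+\lambda_2 L_{\rm cov}$ stays bounded, so the penalties vanish at the limit point and that point is feasible. Second, the objective at the limit point is at least the (P3) optimum, so the limit is optimal for (P3). Hypothesis (ii) ensures the feasible set is nonempty and that (P3) has a genuine KKT point. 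This lets me also state the result at the level of stationary points: KKT points of (P4) converge to KKT points of (P3), with $\lambda_i\times$(penalty gradient) converging to the Lagrange multipliers.

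The main obstacle is the interchange of the three limits $N\to\infty$, $\beta\to\infty$, and $\lambda\to\infty$. To justify it, I would make both the law-of-large-numbers error and the $\log K/\beta$ error uniform in $\mathbf{C}$. The penalty argument then only needs the objective perturbation to vanish uniformly, which is exactly what the first two steps provide. A subtlety to flag is that $[\cdot]_+$ is nonsmooth at zero, and the KKT statement would need a Clarke-subgradient version. I would therefore present the main claim as convergence of global optima, and treat the KKT correspondence as a remark under a constraint qualification implied by (ii).
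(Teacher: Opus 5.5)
Your proposal is correct and follows the paper's own route: almost-sure convergence of the sample covariance from (i), feasibility from (ii), the log-sum-exp sandwich $\max_k y_k\le\tilde y_{\max}\le\max_k y_k+\log K/\beta$ for (iii), and then the classical exterior-penalty convergence result as $\lambda_1,\lambda_2\to\infty$. Your uniform-LLN upgrade and explicit comparison against a feasible point of (P3) spell out what the paper only cites, and one point in your limit-interchange remark needs tightening: the smoothing error enters (P4) multiplied by $\lambda_2$, so uniform smallness of the $\log K/\beta$ gap alone does not make it harmless, and what does so without any rate condition on $\lambda_2/\beta$ is the one-sided bound $\tilde y_{\max}\ge\max_k y_k$ (hence $L_{\rm cov}=0$ on the feasible set of (P3), while $[y_{\rm th}-\max_k y_k]_+\le L_{\rm cov}+\log K/\beta$ gives feasibility of limit points).
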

\begin{IEEEproof}
	Condition (i) ensures that ${\bf{\widehat R}}_y^{\left( \left| \mathcal{U} \right| \right)}\left( {\bf{C}} \right)\mathop  \to \limits^{a.s.} {{\bf{R}}_y}\left( {\bf{C}} \right)$ as $\left| \mathcal{U} \right| \to \infty$ \cite{HDP}. Condition (ii) guarantees the feasibility of (P3). A classical inequality holds for $\tilde{y}_\text{max}$:
	\begin{equation}
		\mathop {\max }\limits_k {y_k} \le \frac{1}{\beta }\log \sum\limits_{k = 1}^K {{e^{\beta {y_k}}}}  \le \mathop {\max }\limits_k {y_k} + \frac{{\log K}}{\beta }.
	\end{equation}
	By the squeeze theorem, $\tilde{y}_\text{max}$ converges uniformly to $\mathop {\max }\limits_k {y_k}$ as $\beta \to \infty$, i.e., condition (iii). Thus, under these conditions, (P4) is asymptotically equivalent to (P3) in the sense that it yields solutions that satisfy the constraints and approach the optimal objective value of (P3) as $\lambda_1, \lambda_2 \to \infty$, under mild regularity conditions \cite{penalty}. 
\end{IEEEproof}
\begin{remark}[Understanding of Lemma~\ref{lemma1}] \normalfont
	Lemma~\ref{lemma1} converts the intractable problem (P3) into the smooth penalized formulation (P4). Since the asymptotic conditions cannot be met in practice, we adopt sufficiently large $|\mathcal U|$, $\beta$, $\lambda_1$, and $\lambda_2$ to obtain a good empirical approximation. In this case, $L_{\rm orth}$ and $L_{\rm cov}$ serve as penalty terms that promote near-orthogonality and reliable coverage among probing beams. We further note that the hinge-type penalty $[y_{\rm th}-\tilde y_{\max}]^{+}$ has zero gradient once $\tilde y_{\max}>y_{\rm th}$, which may slow down training and does not encourage an explicit safety margin. Therefore, in implementation we replace it with a margin-maximization surrogate by setting $\tilde{L}_{\rm cov} \triangleq -\tilde y_{\max}$, so that the objective term $-\lambda_2 L_{\rm cov}^{\rm (impl)}$ directly rewards larger $\tilde y_{\max}$. This surrogate is equivalent to minimizing the hinge loss in the violation regime ($\tilde y_{\max}\le y_{\rm th}$) up to an additive constant, and it continues to enlarge the margin when $\tilde y_{\max}>y_{\rm th}$. The achievable margin increase is inherently bounded due to the array response, hence it will not lead to unbounded growth. Overall, the resulting design maximizes the global information volume in $\mathbf y$ while maintaining geometric diversity in the probing space.
\end{remark}
\vspace{-0.1cm}
\subsubsection{Proposed Solution}
Lemma \ref{lemma1} encourages the dataset size, i.e., $\left| \mathcal{U} \right|$ to be as large as possible. However, computing the full-batch covariance and its log-determinant at every iteration is computationally expensive and memory-intensive. Mini-batch stochastic gradient descent (SGD) offers a scalable alternative by using a randomly sampled subset of users to form a stochastic approximation of the objective and its gradient. Under the standard i.i.d. sampling assumption, the mini-batch estimator yields an unbiased or asymptotically unbiased estimate of the gradient of the empirical objective, enabling efficient optimization within the dataset \cite{penalty}. The detailed algorithm for solving (P4) using mini-batch SGD is provided in \textbf{Algorithm~\ref{alg1}} . Since the mini-batches are drawn from the site-specific dataset, the codebook is optimized by the resulting SGD iterations to capture the site-dependent distribution.
\vspace{-0.1cm}
\begin{remark}[Change of variable] \normalfont
	One remaining challenge in (P4) is that the design variable $\mathbf{C}$ resides on a complex torus. Efficient algorithms for searching solutions in such a manifold include projected gradient descent \cite{penalty} and Riemannian conjugate gradient descent \cite{zhao2025tri}, but these methods are computationally expensive and poorly compatible with the adopted SGD. As a result, we optimize the phase matrix of the probing codebook, i.e., $\boldsymbol{\Phi}=\angle \mathbf{C}$, within the space $[0,2\pi )^{N_t\times K}$, transforming the search space from the original manifold to a subspace of Euclidean space.
\end{remark}
\begin{algorithm}[t]
	\setlength{\textfloatsep}{0.cm}
	\setlength{\floatsep}{0.cm}
	\small
	\caption{Mini-batch SGD for Probing Codebook Design}
	\renewcommand{\algorithmicrequire}{\textbf{Input}}
	\renewcommand{\algorithmicensure}{\textbf{Output}}
	\label{alg1}
	\begin{algorithmic}[1]
		\REQUIRE Site-specific dataset $\{\mathcal{U},\mathcal{H}_\mathcal{U}\}$, batch size $B_1$, step $\eta_{\text{pc}}$, penalty weights $\lambda_1,\lambda_2$, smoothing factor $\beta$, iterations $I_{\text{pc}}$
		\ENSURE Optimized probing codebook $\mathbf C$
		\STATE Initialize $\boldsymbol{\Phi}^{(0)}\in [0,2\pi )^{N_t\times K}$;
		\FOR{$i=1$ to $I_{pc}$}
		\STATE Sample a mini-batch of users $\mathcal{B}_i\subset\mathcal{H}_\mathcal{U}$ with $|\mathcal B_i|=B_1$
		\STATE Construct RSRP measurements using developed model or collect practical measurements $\{\mathbf y_j(\boldsymbol{\Phi}^{(i-1)})\}_{j\in\mathcal B_i}$ 
		\STATE Compute batch mean $\bar{\mathbf y}_{\mathcal B_i}\leftarrow\frac{1}{B_1}\sum_{j\in\mathcal B_i}\mathbf y_j$
		\STATE Compute batch covariance $$
		\widehat{\mathbf R}_{y,\mathcal B_i} \leftarrow \frac{1}{B_1-1}\sum_{j\in\mathcal B_i}(\mathbf y_j-\bar{\mathbf y}_{\mathcal B_i})(\mathbf y_j-\bar{\mathbf y}_{\mathcal B_i})^T$$
		\STATE Evaluate the objective $L_{\mathcal B_i}$ by (\ref{P4});
		\STATE Update by batch gradient $$
		\boldsymbol{\Phi}^{(i)}\leftarrow\boldsymbol{\Phi}^{(i-1)}+\eta_{\text{pc}}\nabla_{\boldsymbol{\Phi}}L_{\mathcal B_i};
		$$
		\ENDFOR
		\STATE Recover the codebook from phase by $\mathbf C^{(I_{\text{pc}})}=\exp\{ j\boldsymbol{\Phi}^{(I_{\text{pc}})} \}$
	\end{algorithmic}
\end{algorithm}
\vspace{-0.1cm}
\subsubsection{Convergence Analysis} \label{remark4}
In each iteration, the covariance matrix $\hat{\mathbf R}_{y,\mathcal{B}_i}$ is computed from a mini-batch of samples and is positive semidefinite by construction. By applying a diagonal loading, $\hat{\mathbf R}_{y,\mathcal{B}_i}\succ 0$ is guaranteed for all iterations, which ensures that $\log\det(\hat{\mathbf R}_{y,\mathcal{B}_i})$ is continuously differentiable with Lipschitz continuous gradients with respect to $\boldsymbol{\Phi}$. Moreover, both the orthogonality penalty $L_{\rm orth}$ and the log-sum-exp surrogate $\tilde y_{\max}$ are smooth functions of $\boldsymbol{\Phi}$. As $\boldsymbol{\Phi}$ lies in a compact domain, the overall objective function admits a finite Lipschitz constant and is therefore $L$-smooth.	At each iteration, the gradient is evaluated using independently sampled UEs, yielding an unbiased stochastic gradient estimator with bounded variance that decreases inversely with the mini-batch size. Under these conditions and with a properly chosen step size, standard results in non-convex stochastic optimization guarantee that \textbf{Algorithm~\ref{alg1}} converges to a first-order stationary point of problem~(\ref{P4}) \cite{sgd}, in the sense that
\begin{equation}
	\lim_{I_{\rm pc}\to\infty} \min_{0\le i\le I_{\rm pc}-1} 
	\mathbb E\!\left[\big\| \nabla_{\boldsymbol{\Phi}} F(\boldsymbol{\Phi}^{(i)})\big\|^2\right] = 0,
\end{equation}
where $F(\boldsymbol{\Phi})$ denotes the expected objective function. Moreover, the convergence rate follows the standard sub-linear behavior of mini-batch stochastic gradient methods, with the residual error floor diminishing as the mini-batch size increases.

\section{CFM-based Beamforming} \label{sec4}
Once the probing codebook $\mathbf C$ is determined, the remaining problem reduces to learning an efficient mapping from the probing measurements to a beamformer that maximizes the expected beamforming utility. Formally, for a fixed probing codebook, problem (P1) reduces to
\begin{subequations}\label{p5}
	\begin{align} 
		\left( {\rm{P5}}\right) \quad \quad  \  \mathop {\max }\limits_{f} \quad & u\left( {{\bf{h}}(\mathcal{E}_s),f\left( {{\bf{y}}} \right)} \right)  \\
		\label{5a} \mathrm{s.t.} \ \  & (\ref{measurement}), f\left( {{\bf{y}}} \right) \in \mathcal{S}_\mathbf{w}.
	\end{align}
\end{subequations}
However, problem (P5) remains intractable in practice. Due to the nonlinear and noisy nature of the RSRP measurements, as well as the inherent ambiguity of power-only observations, multiple channel realizations may give rise to similar RSRP vectors. As a result, the mapping from $\mathbf y$ to the optimal beamformer is generally non-injective, and it is unrealistic to directly regress a unique beamformer from $\mathbf y$. Instead, the solution to (P5) is more appropriately characterized by a conditional distribution of high-utility beamformers given $\mathbf y$. From an implementation perspective, directly optimizing over such a functional space under the constraint (\ref{5a}) is also challenging. To address this issue, we parameterize the mapping using a neural network and generate a small set of candidate beamformers conditioned on the RSRP vector. The final beamformer is then selected from this candidate set according to the achieved utility. This leads to the following practical reformulation of (P5):
\begin{subequations}\label{p6}
	\begin{align} 
		\left( {\rm{P6}}\right) \quad \mathop {\max }\limits_{\theta} \quad & \max_{m} \; u\left( {{\bf{h}}(\mathcal{E}_s),\mathbf{w}_m} \right)  \\
		\label{6a} \mathrm{s.t.} \quad  & (\ref{measurement}),\; \left\lbrace \mathbf{w}_m\right\rbrace_{m=1}^M = f_\theta \left( {{\bf{y}}} \right),\;
		\mathbf{w}_m \in \mathcal{S}_\mathbf{w}.
	\end{align}
\end{subequations}
\subsection{Flow Matching Basis}
FM provides a principled framework for learning complex conditional distributions in such a functional space by modeling a continuous transformation from a simple base distribution to the target distribution. Let $\mathbf{z}_0 \sim p_0(\mathbf{z})$ denotes a sample drawn from a tractable source distribution, and let $\mathbf{z}_1 \sim p_1(\mathbf{z})$ denotes a sample from the unknown target data distribution, both defined on $\mathbb{R}^{N_t}$. FM constructs a family of intermediate distributions ${\left( {{p_t}} \right)_{0 \le t \le 1}}$ that smoothly interpolates between $p_0$ at $t=0$ and $p_1$ at $t=1$. At the sample level, this evolution is governed by a deterministic ordinary differential equation (ODE):
\begin{equation}
	\frac{{d{{\bf{z}}_t}}}{{dt}} = {{\bf{v}}}\left( {{{\bf{z}}_t},t} \right),
\end{equation}
where ${{\bf{v}}}\left( {{{\bf{z}}_t},t} \right):{\mathbb{R}^{{{N_t}\times 1}}} \times \left[ {0,1} \right] \to {\mathbb{R}^{{{N_t}\times 1}}}$ is a time-dependent velocity field. This ODE continuous flow that transports samples from the source distribution to the target distribution. In practice, the source distribution is typically chosen as an isotropic Gaussian, i.e., $p_0=\mathcal{N}(\mathbf{0}_{N_t}, \mathbf{I}_{N_t})$. \\
\indent The corresponding probability density $p_t$ evolves according to the continuity equation
\begin{equation}
	\frac{{\partial {p_t}}}{{\partial t}} =  - {\nabla _{\bf{z}}}\left( {{p_t} \cdot {\bf{v}}\left( {{{\bf{z}}_t},t} \right)} \right),
\end{equation}
which ensures conservation of probability mass along the flow. A key construction in FM is to express the marginal distribution $p_t$ as a mixture of sample-conditional probability paths
\begin{equation}
	{p_t}\left( {\bf{z}} \right) = \int {{p_{t\left| 1 \right.}}\left( {{\bf{z}}\left| {{{\bf{z}}_1}} \right.} \right){p_1}\left( {{{\bf{z}}_1}} \right)d{{\bf{z}}_1}},
\end{equation}
where each conditional path ${p_{t\left| 1 \right.}}\left( {{\bf{z}}\left| {{{\bf{z}}_1}} \right.} \right)$ connects the source distribution at $t=0$ to a distribution that collapses to the target sample $\mathbf{z}_1$ at $t=1$. FM learns a parametric approximation ${{\bf{v}}_{\bf{\theta }}}\left( {{{\bf{z}}_t},t} \right)$ to the velocity field by matching it to a reference velocity under the constructed conditional paths. The model parameters ${\boldsymbol{\theta}}$ are optimized by minimizing the mean-squared error \cite{fm0}:
\begin{equation} \label{FM}
	{L_{\text{FM}}}\left( \boldsymbol{\theta}  \right) = {\mathbb{E}_{{{\bf{z}}_t},t}}\left[ {{{\left\| {{{\bf{v}}_{\bf{\theta }}}\left( {{{\bf{z}}_t},t} \right) - {\bf{v}}\left( {{{\bf{z}}_t},t} \right)} \right\|}^2}} \right],
\end{equation}
where the expectation is taken over the induced intermediate distribution.
\vspace{-0.3cm}
\subsection{Conditional Beamforming Model}
In practice, directly optimizing the objective in (\ref{FM}) is rarely feasible, since the true probability flow velocity associated with the marginal path $p_t$ depends on the unknown target distribution and is therefore intractable. Moreover, the unconditional FM formulation learns a single global vector field that transports samples toward a fixed marginal distribution $p_1(\mathbf{w}^\star)$. Such a formulation is insufficient for the beamforming problem considered in this work. \\
\indent As discussed in the previous section, the optimal beamformer is not drawn from a universal distribution, but is instead determined by the underlying channel realization. The probing feedback acquired during the initial access stage, in the form of the RSRP vector $\mathbf{y}$, provides partial and noisy measurements of the propagation environment, thereby constraining the set of feasible channels and, consequently, the set of optimal beamformers. This implies that the quantity of interest is the conditional distribution ${p_1}\left( {{\bf{w}}^\star\left| {\bf{y}} \right.} \right)$, rather than the marginal distribution $p_1(\mathbf{w}^\star)$. \\
\indent To address these challenges, we adopt CFM, in which the probability flow is explicitly conditioned on the observed RSRP vector. Under CFM, the conditional density evolution satisfies
\begin{equation}
	\frac{{\partial {p_t}\left( {{\bf{z}}\left| {\bf{y}} \right.} \right)}}{{\partial t}} =  - {\nabla _{\bf{z}}}\left( {{p_t}\left( {{\bf{z}}\left| {\bf{y}} \right.} \right) \cdot {\bf{v}}\left( {{{\bf{z}}_t},t;{\bf{y}}} \right)} \right).
\end{equation}
where the velocity field is allowed to depend on the conditioning variable $\mathbf{y}$. \\
\indent A key advantage of CFM is that the training objective simplifies significantly by conditioning the flow on a single target example $\mathbf{z}_1$ sampled from the dataset. Specifically, given paired samples $\mathbf{z}_1$ as condition, CFM constructs an explicit interpolation path between paired samples $(\mathbf{z}_0, \mathbf{z}_1)$ and an intermediate time $t \sim \text{Uni}\left[ 0,1\right]$, CFM constructs an explicit interpolation path between the source and target samples. A commonly adopted choice is linear interpolation as follows:
\begin{equation} \label{interpolation}
	{{\bf{z}}_t} = \left( {1 - t} \right){{\bf{z}}_0} + t{{\bf{z}}_1},
\end{equation}
which induces a valid family of intermediate distributions without requiring explicit evaluation of probability densities. Conditioned on the RSRP vector $\mathbf{y}$, the corresponding reference velocity along this path is uniquely determined as
\begin{equation} \label{velocity}
	{\bf{v}}\left( {{{\bf{z}}_t},t;{\bf{y}}} \right) = \frac{{d{{\bf{z}}_t}}}{{dt}} = {{\bf{z}}_1} - {{\bf{z}}_0}.
\end{equation}
The velocity field is then parameterized by a neural network ${{\bf{v}}_{\boldsymbol{\theta }}}\left( {{{\bf{z}}_t},t;\mathbf{y}} \right)$, where different RSRP realizations induce different flow dynamics and, consequently, different conditional distributions over beamformers. The model parameters $\theta$ are learned by minimizing the following mean squared error objective: 
\begin{equation} \label{CFM}
	{L_{\text{CFM}}}\left( \boldsymbol{\theta}  \right) = {\mathbb{E}_{{{\bf{z}}_0},{{\bf{z}}_1},\mathbf{y},t}}\left[ {{{\left\| {{{\bf{v}}_{\boldsymbol{\theta }}}\left( {{{\bf{z}}_t},t;\mathbf{y}} \right) - \left( {{{\bf{z}}_1} - {{\bf{z}}_0}} \right)} \right\|}^2}} \right].
\end{equation}
This formulation learns an implicit approximation to the posterior distribution ${p_1}\left( {{\bf{w}}^\star\left| {\bf{y}} \right.} \right)$. The conditioning variable ${\bf{y}}$ selects a specific vector field from a family of flows, thereby reshaping the transport dynamics to be consistent with the observed propagation signature encoded in the RSRP measurements.
\vspace{-0.3cm}
\subsection{Conditional Flow Matching Architecture}
\vspace{-0.1cm}
After introducing the design principles and training formulation in the previous subsections, we now present the implementation details of the proposed CFM framework. This subsection focuses on the network architecture, the conditioning mechanism, and the overall training structure.
\begin{figure}[tb]
	\centering
	\includegraphics[scale=0.18]{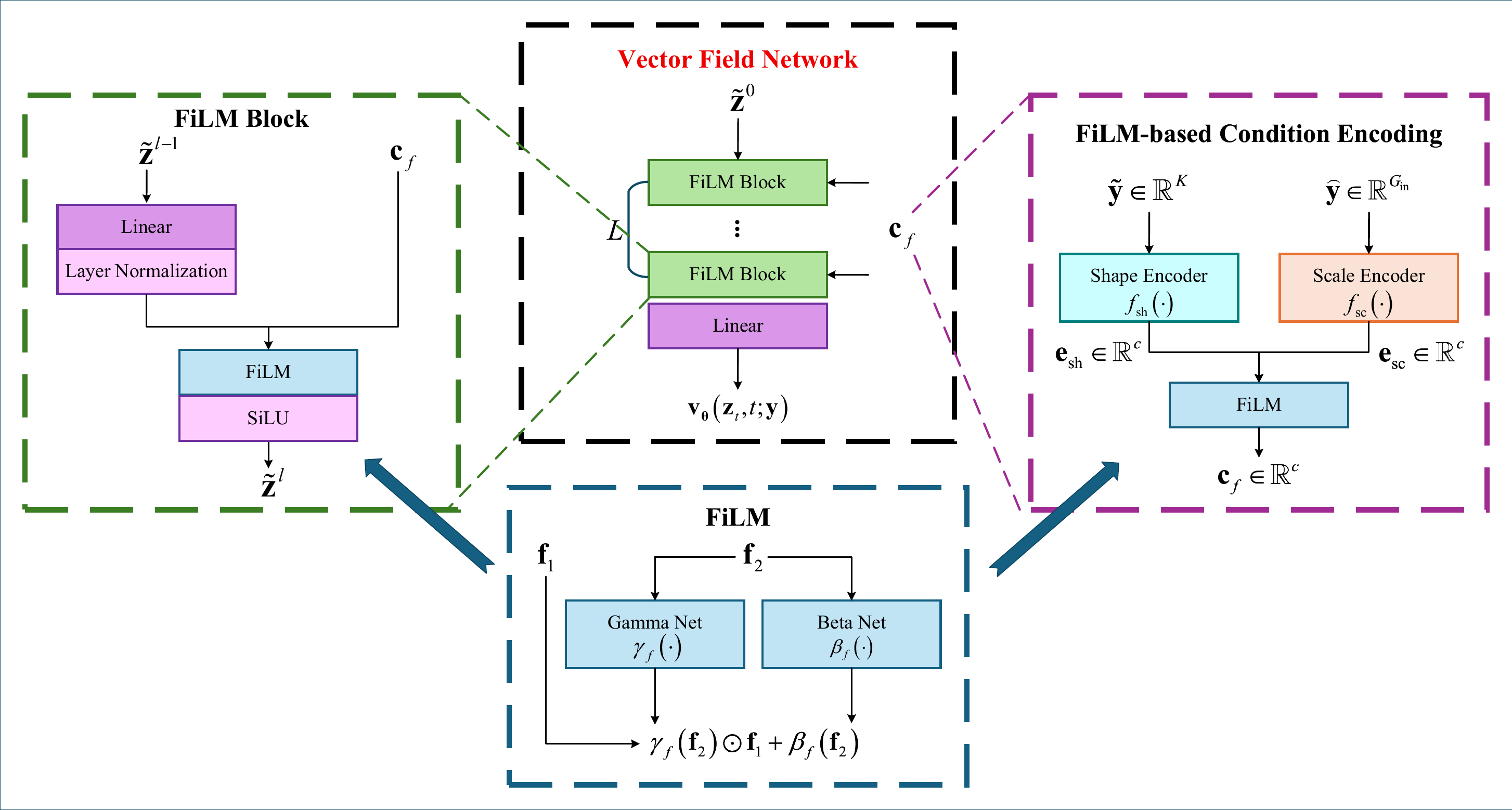}
	\caption{Illustration of the proposed CFM architecture}
	\label{film_cfm}
\end{figure} 
\subsubsection{Network Architecture}
An illustration of the proposed architecture is shown in Fig. \ref{film_cfm}. The CFM model consists of two main components: a vector field network and a condition encoder. The condition encoder adopts a dual-branch structure to process the RSRP vector and extract complementary conditioning features. Specifically, the first branch, referred to as the shape encoder ${f_{{\rm{sh}}}}\left(  \cdot  \right)$, maps the normalized RSRP vector in the logarithmic domain, denoted by $\tilde{\mathbf{y}}$, into a shape embedding $\mathbf{e}_{\text{sh}} \in \mathbb{R}^{c}$. This branch captures the relative beam-wise power pattern and processes the input through a MLP equipped with layer normalization \cite{ba2016layer} and sigmoid Linear Unit (SiLU) activation \cite{ramachandran2017searching}. The resulting embedding primarily encodes geometry-related information associated with the propagation environment. \\
\indent The second branch, referred to as the scale encoder ${f_{{\rm{sc}}}}\left(  \cdot  \right)$, extracts global power-related features from the RSRP vector. Its input consists of $G_{\text{in}}$-dimensional global statistics, denoted by ${{\bf{\mathord{\buildrel{\lower3pt\hbox{$\scriptscriptstyle\frown$}}\over y} }}}$ which summarize the overall scale of the received power. In this work, we adopt the sample mean and standard deviation of the RSRP vector, yielding $G_{\text{in}}=2$. The scale encoder maps these features into a scale embedding $\mathbf{e}_{\text{sc}} \in \mathbb{R}^{c}$ using an architecture similar to that of the shape encoder. The two embeddings are fused using a FiLM module \cite{film} to generate the final conditioning vector, given by
\begin{equation}
	{{\bf{c}}_f} = \gamma_f \left( {{{\bf{e}}_{{\rm{sc}}}}} \right) \odot {{\bf{e}}_{{\rm{sh}}}} + \beta_f \left( {{{\bf{e}}_{{\rm{sc}}}}} \right),
\end{equation}
where $\gamma_f \left( \cdot \right)$ and $\beta_f \left( \cdot \right)$ denote learned affine projections, and $\odot$ represents element-wise multiplication. \\
\indent Based on the resulting conditioning vector, the conditional velocity field ${{\bf{v}}_{\boldsymbol{\theta }}}\left( {{{\bf{z}}_t},t;\mathbf{y}} \right)$ is realized by stacking multiple FiLMBlocks. In the vector field network, the complex-valued state $\mathbf{w}_t^\star=\mathbf{z}_t \in \mathbb{C}^{{N_t}\times 1}$ is first flattened into a real-valued representation $\bar{\mathbf{z}}_t \in \mathbb{R}^{{2N_t}\times 1}$ by concatenating its real and imaginary parts. This vector is then concatenated with the scalar time variable $t$ to form the initial hidden feature ${\bf{\tilde z}}^0=[\bar{\mathbf{z}}_t;t] \in \mathbb{R}^{{(2N_t+1)\times 1}}$. Each FiLMBlock maps ${\bf{\tilde z}}^{l-1} \in \mathbb{R}^{{d_{l-1}}\times 1}$ to ${\bf{\tilde z}}^{l} \in \mathbb{R}^{{d_{l}}\times 1}$. through four successive operations: a linear projection, layer normalization, FiLM-based modulation using the conditioning vector $\mathbf{c}_f$, and a nonlinear activation. Specifically, the transformation in the $l$-th block is given by
\begin{equation}
	{{{\bf{\tilde z}}}^l} = {\rm{SiLU}}\left( {{\gamma _l}\left( {{{\bf{c}}_f}} \right) \odot {\rm{LN}}\left( {{{\bf{W}}_l}{{{\bf{\tilde z}}}^{l - 1}} + {{\bf{b}}_l}} \right) + {\beta _l}\left( {{{\bf{c}}_f}} \right)} \right),
\end{equation}
where ${\rm{LN}}$ denotes layer normalization, ${\rm{SiLU}}$ denotes the SiLU activation function, and $\mathbf{W}_l,\mathbf{b}_l$ are the parameters of the linear layer. The functions ${\gamma _l}$ and ${\beta _l}$ implement FiLM modulation at the $l$-th FiLMBlock.\\
\indent In the proposed architecture, the same conditioning vector $\mathbf{c}_f$ is injected into every FiLMBlock, while each block employs its own set of affine modulation parameters. As a result, the conditioning does not merely act as an auxiliary input, but instead dynamically re-parameterizes the transformations at each layer, effectively steering the geometry of the learned velocity field in a manner consistent with the observed RSRP signature.
\vspace{-0.3cm}
\begin{remark}[Design of condition encoding] \normalfont
	Directly feeding the raw RSRP vector $\mathbf{y}$ into a single network is possible but empirically ineffective, as RSRP values typically lie in a highly negative range and exhibit large dynamic variations. To address this issue, we decompose the RSRP vector into two complementary components: a relative beam-wise power pattern that captures geometry-related information, and a global scale that reflects coarse propagation conditions. By employing separate shape and scale encoders and fusing their outputs through FiLM modulation, the proposed conditioning mechanism disentangles directional and power-related uncertainties, enabling the probability flow to adapt to the partial and noisy nature of RSRP vectors in a structured and stable manner.
\end{remark} 
\vspace{-0.5cm}
\subsection{Training Algorithm and Sampling Method}
\vspace{-0.1cm}
\begin{algorithm}[t]
	\setlength{\textfloatsep}{0.cm}
	\setlength{\floatsep}{0.cm}
	\small
	\renewcommand{\algorithmicrequire}{\textbf{Input}}
	\renewcommand{\algorithmicensure}{\textbf{Output}}
	\caption{Training of FiLM-based CFM}
	\label{alg2}
	\begin{algorithmic}[1]
		\REQUIRE Site-specific dataset $\{\mathcal{U},\mathcal{H}_\mathcal{U}\}$,  codebook $\mathbf{C}$, batch size $B_2$, base standard deviation $\sigma_0$, learning rate $\eta_\theta$, iterations $I_{\theta}$
		\ENSURE Trained parameters $\boldsymbol\theta$ of the vector field $\mathbf v_{\boldsymbol\theta}(\cdot)$
		\STATE Initialize network parameters $\boldsymbol\theta^{(0)}$;
		\FOR{$i=1$ \TO $I_{\theta}$}
		\STATE Sample a mini-batch of users $\mathcal{B}_i\subset\mathcal{H}_\mathcal{U}$ with $|\mathcal B_i|=B_2$;
		\STATE Compute target beamformer $\mathbf{w}_{\mathcal B_i}^{\star}$;
		\STATE Compute conditional RSRP vector $\mathbf{y}$;
		\STATE Sample $\mathbf{t} \in \mathbb{R}^{B_2}$ from ${\rm Uni}[0,1]$;
		\STATE Sample base noise $\mathbf z_0$ from the source Gaussian $\mathcal N(0, \sigma_0^2)$;
		\STATE Construct interpolation state $\mathbf z_t$ by (\ref{interpolation});
		\STATE Compute reference velocity $\mathbf{v}$ by (\ref{velocity});
		\STATE Predict velocity by $\hat{\mathbf v} \leftarrow \mathbf v_{\boldsymbol\theta}\!\big(\mathbf z_t, \mathbf{t} \mid \mathbf{y}\big)$;
		\STATE Compute loss $L_{\text{CFM}}(\boldsymbol\theta) \leftarrow \mathbb{E}_{\mathcal{B}_i}\|\hat{\mathbf v}-\mathbf v\|^2$;
		\STATE Update $\boldsymbol\theta \leftarrow \boldsymbol\theta - \eta_\theta \nabla_{\boldsymbol\theta} L_{\text{CFM}}(\boldsymbol\theta)$;
		\ENDFOR
	\end{algorithmic}
\end{algorithm}
The training procedure of the proposed CFM model is summarized in \textbf{Algorithm~\ref{alg2}.} In this work, the RSRP vector is first processed to extract both relative and global features. Specifically, given the RSRP vector $\mathbf{y}$ in the logarithmic domain, we compute its sample mean and standard deviation as
\begin{equation}
	\mu_s = \frac{1}{K}\sum_{k=1}^{K} y_k, \  \sigma_s = \sqrt{\frac{1}{K}\sum_{k=1}^{K}\!\left(y_k-\mu_s\right)^2}.
\end{equation}
The normalized RSRP vector and the associated global statistics are then given by
\begin{equation}
		\tilde{\mathbf{y}} = \frac{\mathbf{y}-\mu_s\mathbf{ 1}_K}{\sigma_s} \in \mathbb{R}^{K\times 1}, \
		{{\bf{\mathord{\buildrel{\lower3pt\hbox{$\scriptscriptstyle\frown$}}\over y} }}}=[\mu_s, \sigma_s]^T \in \mathbb{R}^{2\times 1},
\end{equation}
which serve as inputs to the shape and scale encoders, respectively. The CFM model is also trained using mini-batch SGD, hence \textbf{Algorithm~\ref{alg2}} follows a convergence analysis similar to Subsubsection~\ref{remark4}. \\
\indent After training, the conditional flow provides a direct mechanism for sampling beamformers from the learned posterior distribution $p\left(\mathbf{w}^\star \mid \mathbf{y} \right)$. At inference time, given an RSRP vector $\mathbf{y}$, the BS draws $M$ independent samples by integrating the conditional flow from multiple source initializations. These samples serve as candidate beamformers that account for the residual uncertainty after probing. To select a beamformer for data transmission, the BS transmits these candidate beams to the UE in the second interaction stage. The UE evaluates the instantaneous array gain associated with each candidate and feeds back the index of the strongest beam. The selected beamformer is then used for data transmission in the subsequent beam-locking stage. The detailed procedure for sampling and determining the beamformer is described in \textbf{Algorithm~\ref{alg3}}.  \\
\indent This selection strategy effectively approximates a maximum a posteriori (MAP) solution, as choosing the beamformer with the highest realized gain corresponds to selecting the most probable and most effective mode under the learned posterior distribution. As the number of samples $M$ increases, the probability that the selected beamformer approaches the true MAP solution improves, while the associated feedback and computational overhead remain moderate. Consequently, a fundamental trade-off arises between beamforming optimality and feedback overhead, governed by the choice of $M$. This trade-off will be quantitatively investigated in the following section.

\begin{algorithm}[t]
	\setlength{\textfloatsep}{0.cm}
	\setlength{\floatsep}{0.cm}
	\small
	\renewcommand{\algorithmicrequire}{\textbf{Input}}
	\renewcommand{\algorithmicensure}{\textbf{Output}}
	\caption{Candidate beamformers sampling and final beamformer determination}
	\label{alg3}
	\begin{algorithmic}[1]
		\REQUIRE Conditioning RSRP input $\mathbf{y}$, trained $\mathbf v_{\boldsymbol\theta}$, base standard variance $\sigma_0$, number of steps $N_{\text{step}}$, temperature $T$, number of samples $M$
		\ENSURE Selected beam $\mathbf{w}^\star$
		\STATE Set $\Delta t \leftarrow 1/N_{\text{step}}$;
		\FOR{$m=1$ \TO $M$}
		\STATE Initialize $\mathbf z^{(m)} \leftarrow T\sigma_0 \boldsymbol\epsilon,\;\boldsymbol\epsilon\sim\mathcal N(\mathbf 0,\mathbf I)$;
		\ENDFOR
		\FOR{$n=0$ \TO $N_{\text{step}}-1$}
		\STATE $t_s \leftarrow n/N_{\text{step}}$;
		\FOR{$m=1$ \TO $M$}
		\STATE $\mathbf u \leftarrow \mathbf v_{\boldsymbol\theta}\!\big(\mathbf z^{(m)}, t_s \mid \mathbf{y}\big)$;
		\STATE $\mathbf z^{(m)} \leftarrow \mathbf z^{(m)} + \Delta t\, \mathbf u$;
		\ENDFOR
		\ENDFOR
		\STATE Direct $M$ sampled candidate beams $\{\mathbf z^{(m)}\}_{m=1}^{M}$ to the UE 
		\STATE UE reports the beam index $m^\star$ with the largest gain;
		\STATE $\mathbf{w}^\star \leftarrow \mathbf{z}^{(m^\star)}$;
	\end{algorithmic}
\end{algorithm}
\vspace{-0.3cm}
\section{Simulation Results} \label{sec5}
In this section, we conduct simulations to evaluate the performance of the proposed framework, including both the probing codebook design and the CFM-based beamforming model.
\begin{figure}[t]
	\centering
	\subfloat[``O1\_28'']{
		\label{o128}
		\includegraphics[scale=0.025]{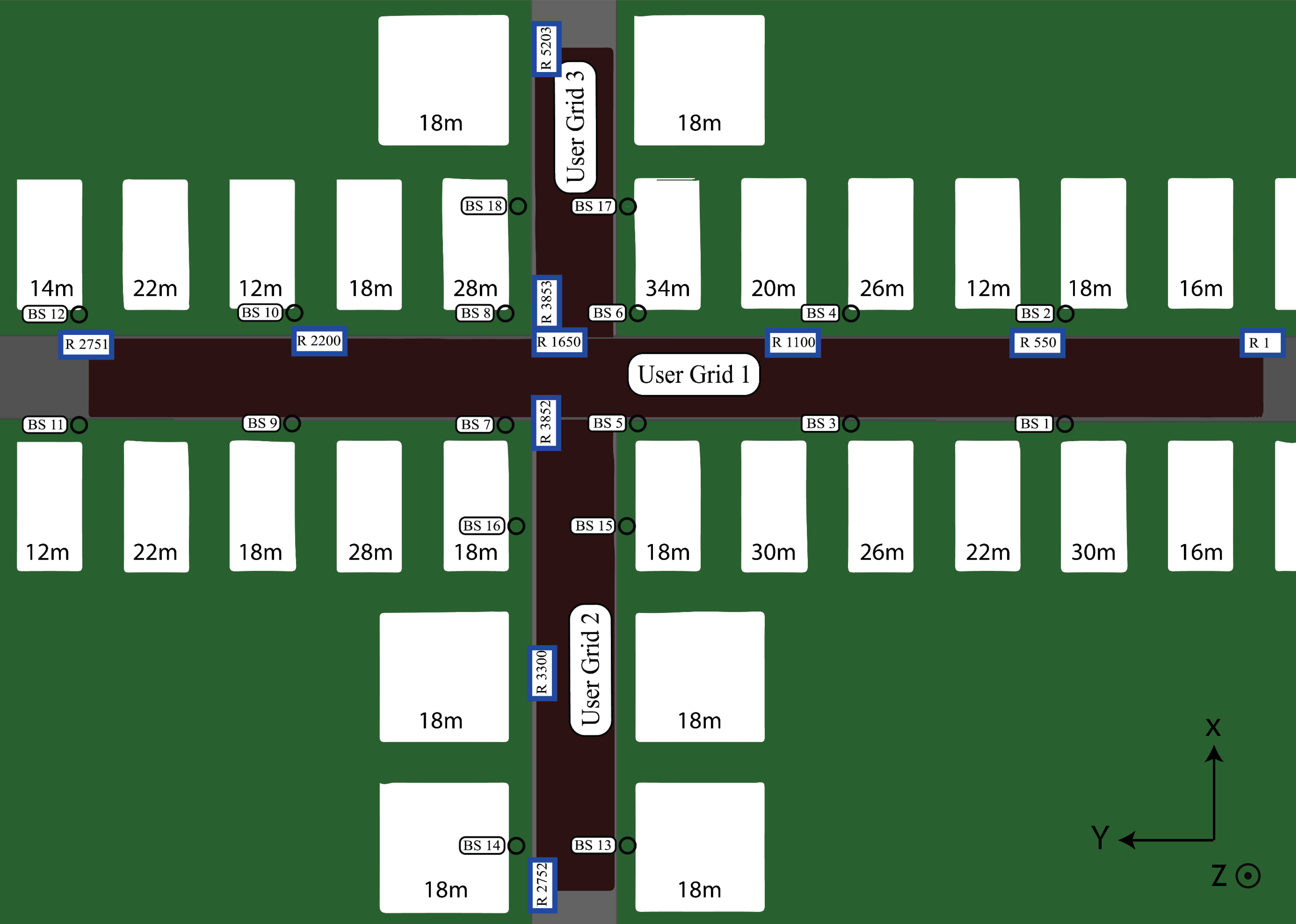}
	} \hfill
	\subfloat[``O1B\_28'']{
		\label{o1b28}
		\includegraphics[scale=0.128]{O1B_28}
	}
	\caption{Illustration of adopted scenarios}
	\label{scenarios}
\end{figure}
\vspace{-0.3cm}
\subsection{Simulation Setup}
Two representative scenarios, denoted as ``O1\_28'' and ``O1B\_28'', from the DeepMIMO dataset \cite{deepmimo} are adopted, as illustrated in Fig.~\ref{scenarios}. For the ``O1\_28'' scenario, we select BS3 and User Grid 1, where the majority of UEs experience line-of-sight (LoS) propagation conditions. This scenario serves as a relatively simple benchmark to verify the basic feasibility and performance of the proposed framework. For the ``O1B\_28'' scenario, the same BS and user grid are considered. However, additional blockages and nearby reflectors are present around the BS, resulting in heterogeneous propagation conditions that include both LoS and non-line-of-sight (NLoS) links. This scenario poses a more challenging environment and is therefore used to assess the robustness of the proposed framework under complex propagation conditions. \\
\indent Channel realizations for both scenarios are generated using the DeepMIMO dataset. The key system parameters and the adopted neural network configurations are summarized in Table~\ref{tab1} and Table~\ref{tab2}, respectively, unless otherwise stated. Since the ray-tracing engine in DeepMIMO already provides realistic channel realizations, a relatively small log-normal shadowing variance is assumed in the simulations, set to $\sigma_{\text{sh}}^2=1$ dB.
\begin{table}[t]
	\small
	\centering
	\caption{Simulation settings}
	\label{tab1}
	\begin{tabular}{|c|c|c|}
		\hline
		\textbf{Parameter}       & \textbf{Description}                           & \textbf{Value}         \\ \hline
		$f_c$                        & Carrier frequency                            & 28 GHz                      \\ \hline
		$\text{BW}$                        & Bandwidth                            & 100 MHz                      \\ \hline
		$P_t$                        & Transmit power                            & 40 dBm                     \\ \hline
		$K$                        & Size of codebook                            & 16                      \\ \hline
		$M$                        & Number of candidate beams                            & 8                      \\ \hline
		$N_t$                        & Number of antennas                             & 64                      \\ \hline
		$L_s$                        & Number of SSB symbols                 & 5                     \\ \hline
		$d$                   & Antenna spacing                 & $\lambda/2$               \\ \hline
		$S_n$                        & Noise power spectrum density & -170 dBm/Hz                      \\ \hline
		$\sigma_\text{sh}^2$                        & Log-variance of the shadowing  & 1 dB                      \\ \hline
		$\lambda_1$                   & Orthogonality penalty                            & 0.1                  \\ \hline
		$\lambda_2$                    & Coverage penalty                             & 0.01                    \\ \hline
		$\beta$ & Smoothing factor                      & 5                 \\ \hline
		$\sigma_0$                 & Base standard deviation                      & 1                  \\ \hline
		$N_{\text{step}}$                     & Number of sampling steps                              & 40                  \\ \hline
		$T$                  & Sampling temperature                          & 0.5 \\ \hline
	\end{tabular}
\end{table}
\begin{table}[bt]
	\centering
	\caption{Architectures and parameters adopted in the simulation}
	\label{tab2}
	\begin{tabular}{cccc}
		\toprule
		\textbf{Layer}        & \textbf{Parameter}      & \textbf{Layer}            & \textbf{Parameter}         \\
		\midrule
		\multicolumn{2}{c}{\textbf{Shape Encoder}}      & \multicolumn{2}{c}{\textbf{Scale Encoder}}             \\
		Linear                & ($K$, 128)                & Linear                    & ($G_\text{in}$, 128)               \\
		LN+SiLU               & 128                     & SiLU                      & /                          \\
		Linear                & (128, 128)              & Linear                    & (128, 128)                 \\
		LN+SiLU               & 128                     & SiLU                      & /                          \\
		\midrule
		\multicolumn{2}{c}{\textbf{FiLM Block}}         & \multicolumn{2}{c}{\textbf{Vector Field Network}}      \\
		Linear                & (129, 256)              & 3 $\times$ FiLM Block              & /                          \\
		LN+SiLU               & 256                     & Linear                    & (256, 128)                 \\
		\midrule
		\multicolumn{2}{c}{\textbf{FiLM in FiLM Block}} & \multicolumn{2}{c}{\textbf{FiLM in Condition Encoder}} \\
		Linear                & (128, 256)              & Linear                    & (128,128)                 \\
		\bottomrule
	\end{tabular}
\end{table}
\vspace{-0.3cm}
\subsection{Evaluation of Proposed SIM Codebook}
We unfold this section by demonstrating the convergence of the proposed codebook design method, the achieved objective, and the properties of the learned codebook. The popular DFT codebook is used as a baseline.  \\
\indent Fig.~\ref{codebook_loss} shows the convergence of the proposed \textbf{Algorithm~\ref{alg1}} under scenario ``O1\_28''. The total training and validation loss curves are plotted in the outer graph, while the three components of the loss, i.e., $\log \det {\bf{\widehat R}}_y^{\left( N \right)}$, ${L_{{\rm{orth}}}}$, are shown in the sub-figures. The algorithm achieves rapid convergence within tens of iterations, particularly for the information and correlation components. The coverage loss exhibits slower convergence due to its logarithmic scale, but eventually converges as well. \\
\begin{figure}[t]
	\centering
	\includegraphics[scale=0.45]{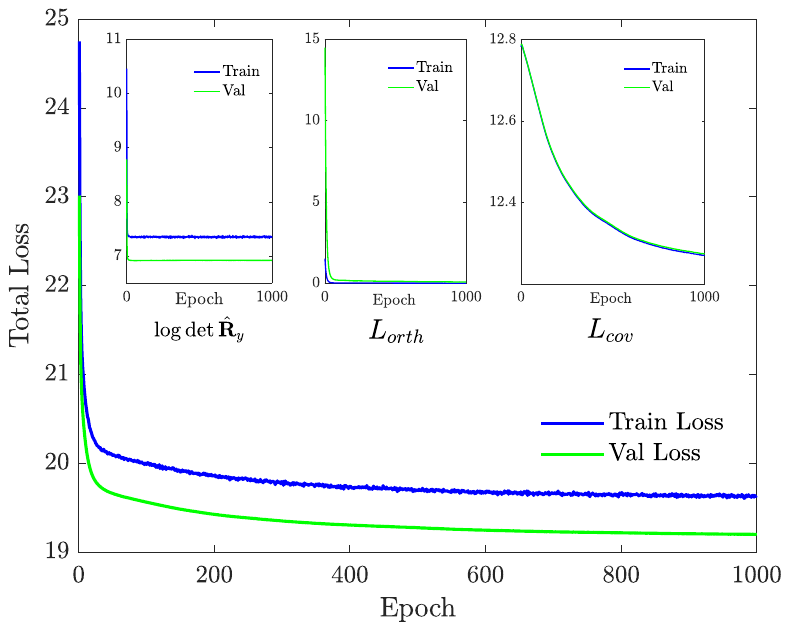}
	\caption{Convergence of the proposed codebook design method}
	\label{codebook_loss}
\end{figure}
\indent Table~\ref{tab3} reports the log-determinant values of the RSRP vectors covariance matrix $\hat{\mathbf{R}}_y$ induced by different probing codebooks, providing the characterization of their information structure. As shown in Table~\ref{tab3}, the DFT codebook exhibits a rapid decrease in log-determinant as the codebook size increases, with a more severe degradation in the blockage-dominated ``O1B\_28'' scenario. This behavior reflects the fact that many DFT probing beams illuminate overlapping angular regions and thus produce highly correlated RSRP responses.  \\
\indent In contrast, the learned codebook maintains nearly constant log-determinant values across different codebook sizes and propagation scenarios, indicating that the resulting probing measurements are information-rich and exhibit limited redundancy. These results suggest that the learned beams adapt to the site-specific propagation geometry and probe distinct dominant paths or scattering directions, thereby providing complementary RSRP measurements. As a result, each probing beam contributes meaningful and non-redundant information, enabling robust beam management even in challenging blockage environments.
\begin{table}[t]
	\small
	\centering
	\caption{Log-determinant value of RSRP covariance matrix obtained by different codebooks}
	\label{tab3}
	\begin{tabular}{cccccc}
		\toprule
		\textbf{O1\_28} & \textbf{DFT} & \textbf{Proposed} & \textbf{O1B\_28} & \textbf{DFT} & \textbf{Proposed} \\
		\midrule
		$K=64$     & -42.81       & -7.02            & $K=64$      & -350.11      & -8.64            \\
		\midrule
		$K=32$     & -38.45       & -6.92            & $K=32$      & -196.36      & -7.20            \\
		\midrule
		$K=16$    & -28.63       & -6.91            & $K=16$      & -89.02       & -7.03            \\
		\midrule
		$K=8$      & -19.82       & -6.90            & $K=8$       & -30.88       & -6.96    \\
		\bottomrule  
	\end{tabular}
\end{table}
\begin{figure*}[t]
	\centering
	\subfloat[``O1\_28'' Euclidean]{
		\label{ed1}
		\includegraphics[scale=0.3]{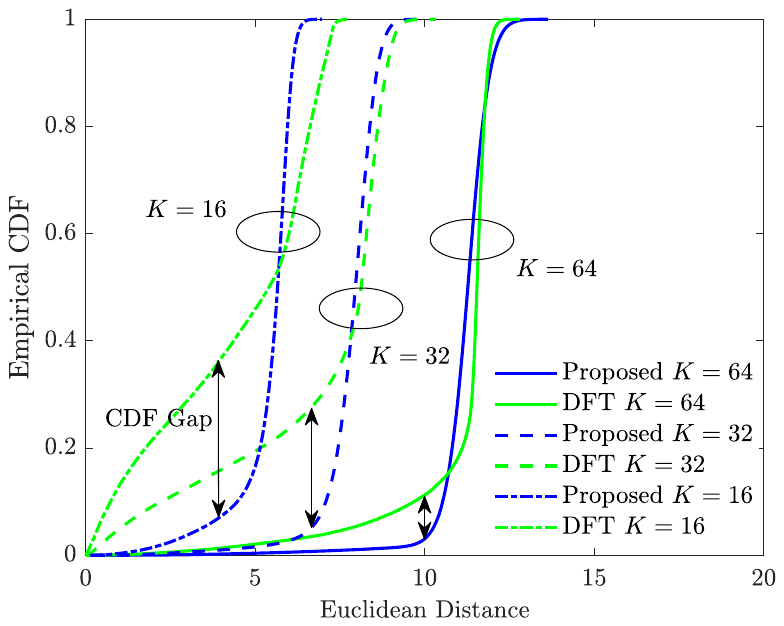}
	} \hfill
	\subfloat[``O1\_28'' Mahalanobis]{
		\label{wd1}
		\includegraphics[scale=0.3]{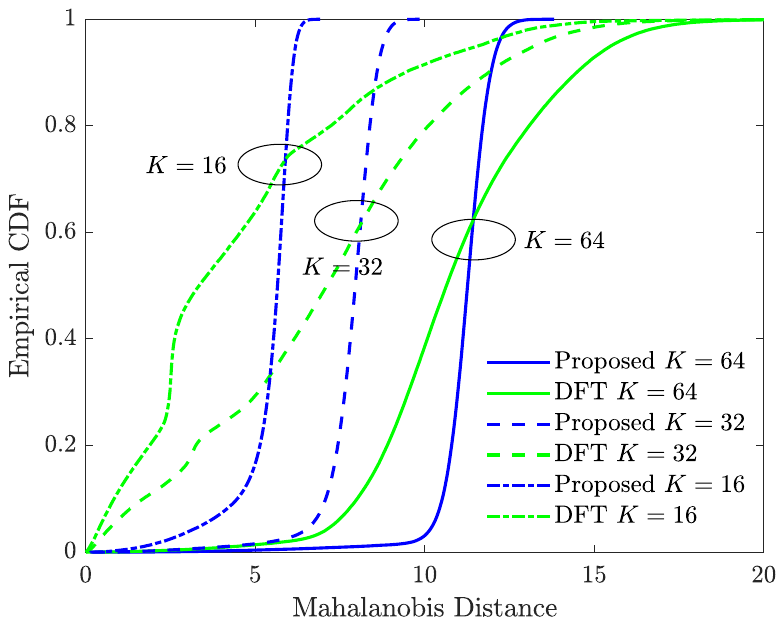}
	} \hfill
	\subfloat[``O1B\_28'' Euclidean]{
		\label{ed2}
		\includegraphics[scale=0.3]{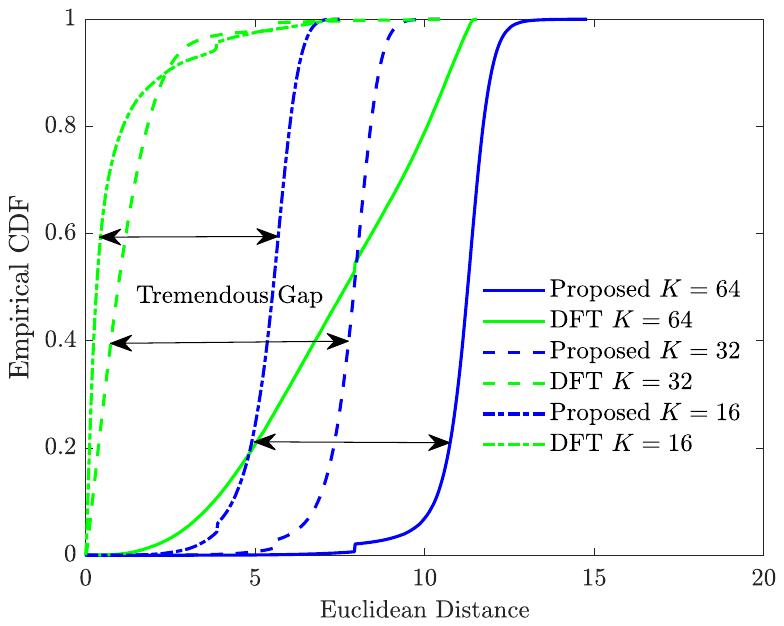}
	} \hfill
	\subfloat[``O1B\_28'' Mahalanobis]{
		\label{wd2}
		\includegraphics[scale=0.3]{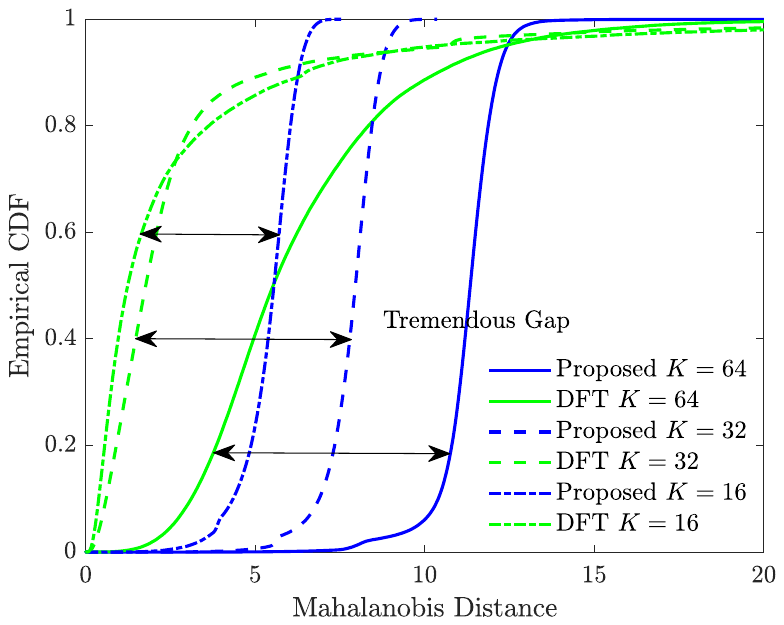}
	}
	\caption{Comparison of RSRP measurements distance between UE pairs}
	\label{dis}
\end{figure*}

Fig.~\ref{dis} shows the empirical cumulative distribution function (CDF) of pairwise Euclidean and Mahalanobis distances between RSRP vectors induced by different probing codebooks. The Euclidean distance reflects the raw dissimilarity between RSRP vectors, while the Mahalanobis distance accounts for their correlation structure and quantifies the effective statistical separability of RSRP measurements. The learned codebook results in distances that concentrate at moderate-to-high values, indicating more uniformly discriminative measurements across UEs. In contrast, the DFT codebook produces a wide-spread distance distribution with a long tail, implying that while a few user pairs are highly separable, many others are weakly distinguishable. This suggests that the learned codebook allocates probing information more evenly across effective dimensions, reducing posterior uncertainty for the majority of UEs, consistent with the objective of maximizing the log-determinant of the RSRP correlation matrix. These findings align with the results from the log-determinant and eigen-spectrum comparisons. \\
\indent Fig. \ref{figcodebook} shows the learned probing codebooks for the two scenarios with $K=8$. In the ``O1\_28'' scenario (first row) the learned beams exhibit structured radiation patterns with two to three pronounced main lobes. These beams focus energy on a small number of angular regions, reflecting the sparse and stable angular support of the underlying channels. In contrast, the ``O1B\_28'' scenario (second row), which includes partial blockages, shows irregular radiation patterns with no dominant lobes in blocked regions. This behavior suggests that, due to blockage, signal propagation is dominated by weak multi-path components, and sharp beams in these regions offer little additional information. Consequently, the learned codebook avoids allocating excessive probing energy to blocked directions and instead focuses on unblocked regions where reliable propagation paths exist. \\
\indent From both a physical and information-theoretic perspective, this behavior is expected. In LoS-dominant environments, probing beams that focus on a few angular directions maximize the sensitivity of RSRP measurements to small variations in user location, enhancing its distinguishability. In contrast, in NLoS conditions, angular energy becomes more dispersed, and sharper beams in blocked regions are less informative. The learned codebooks therefore adapt to the environment's propagation characteristics, balancing angular resolution and robustness. They preserve high-gain lobes in reliable regions and adopt low-gain, irregular patterns in blocked areas, effectively capturing environment-specific channel characteristics.
\begin{figure*}[t]
	\centering
	\begin{minipage}{0.11\linewidth}
		\centering
		\includegraphics[width=\linewidth]{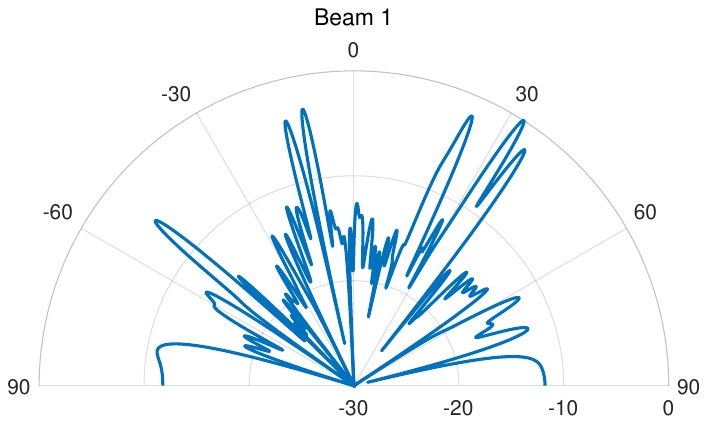}
	\end{minipage}
	\hfill
	\begin{minipage}{0.11\linewidth}
		\centering
		\includegraphics[width=\linewidth]{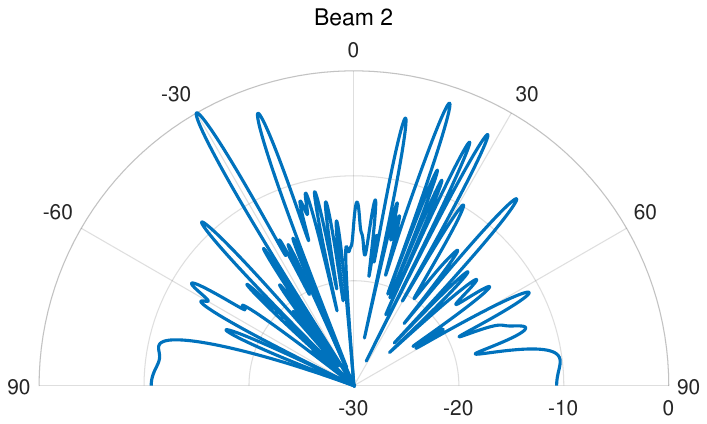}
	\end{minipage}
	\hfill
	\begin{minipage}{0.11\linewidth}
		\centering
		\includegraphics[width=\linewidth]{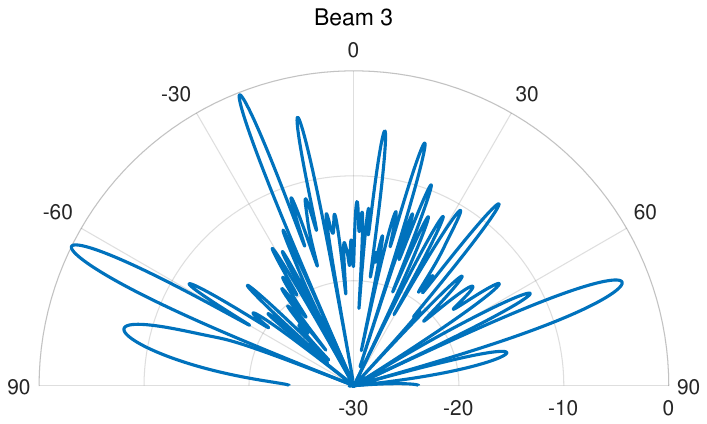}
	\end{minipage}
	\hfill
	\begin{minipage}{0.11\linewidth}
		\centering
		\includegraphics[width=\linewidth]{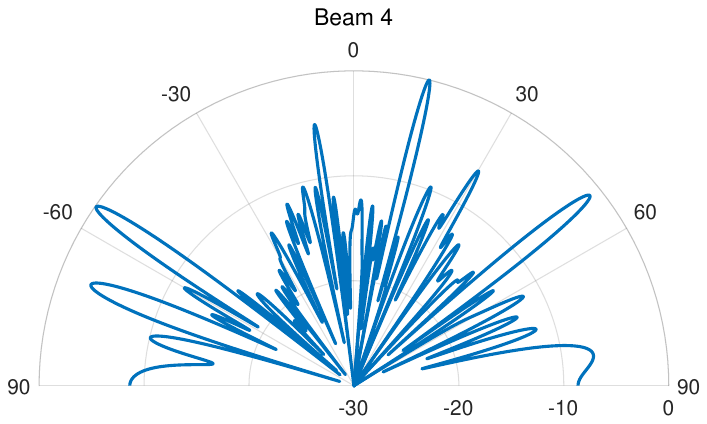}
	\end{minipage}
	\hfill
	\begin{minipage}{0.11\linewidth}
		\centering
		\includegraphics[width=\linewidth]{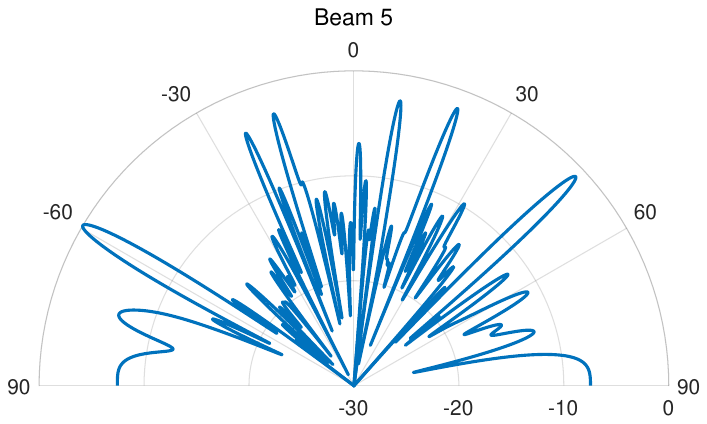}
	\end{minipage}
	\hfill
	\begin{minipage}{0.11\linewidth}
		\centering
		\includegraphics[width=\linewidth]{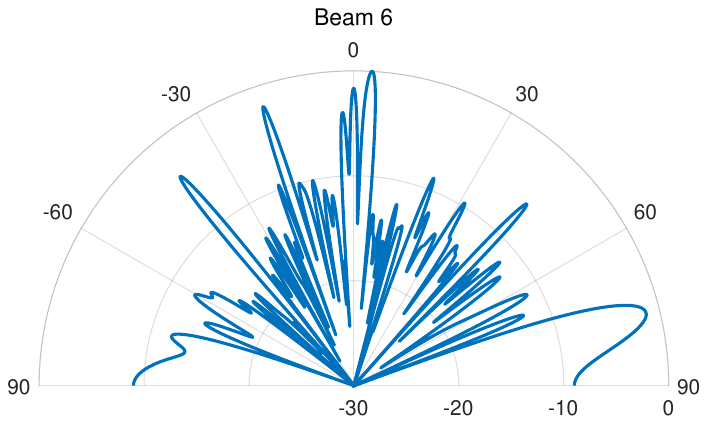}
	\end{minipage}
	\hfill
	\begin{minipage}{0.11\linewidth}
		\centering
		\includegraphics[width=\linewidth]{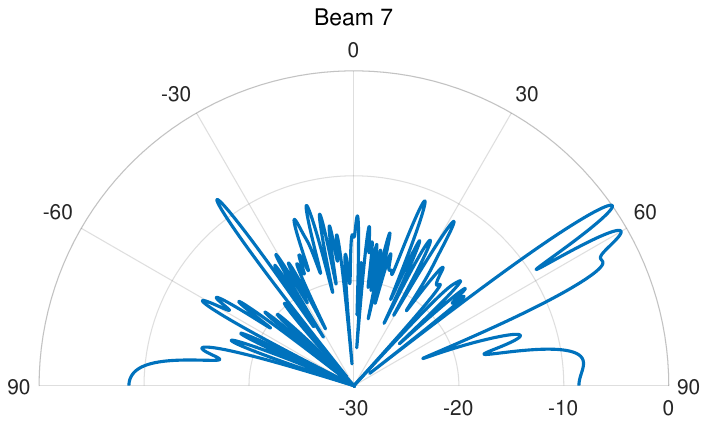}
	\end{minipage}
	\hfill
	\begin{minipage}{0.11\linewidth}
		\centering
		\includegraphics[width=\linewidth]{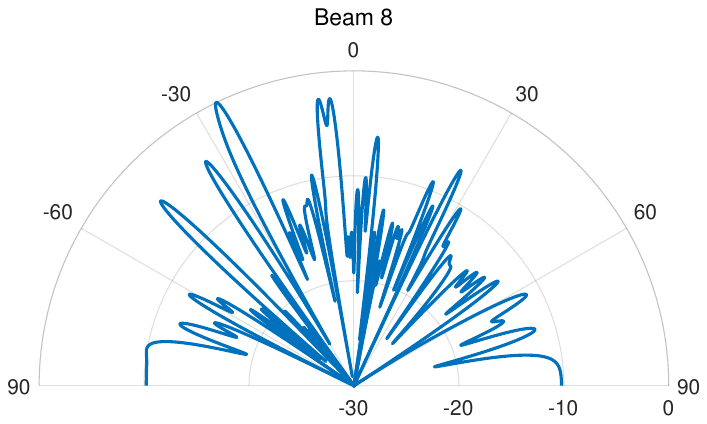}
	\end{minipage} \\
	\begin{minipage}{0.11\linewidth}
		\centering
		\includegraphics[width=\linewidth]{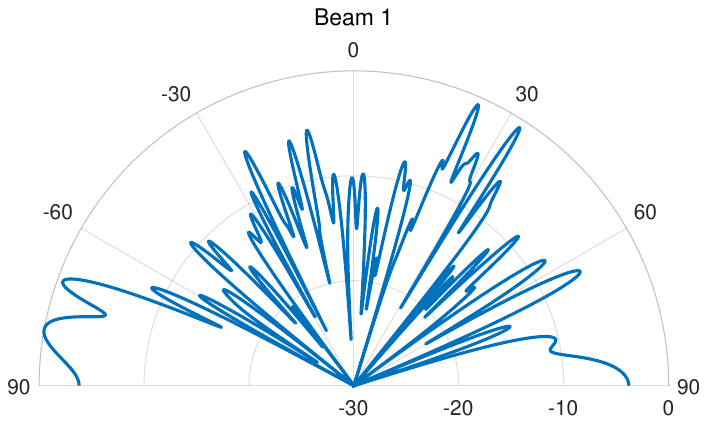}
	\end{minipage}
	\hfill
	\begin{minipage}{0.11\linewidth}
		\centering
		\includegraphics[width=\linewidth]{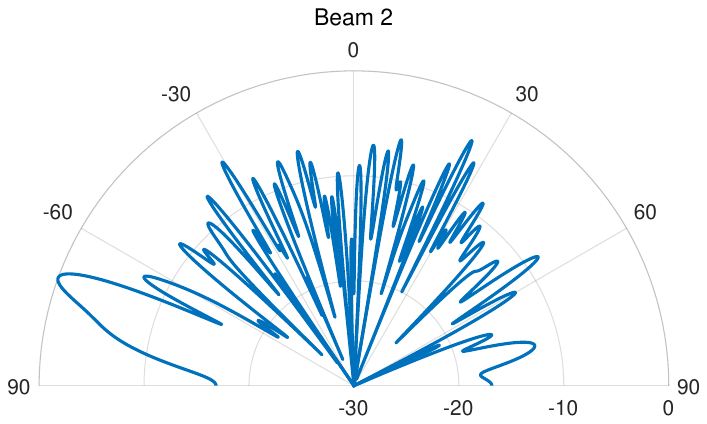}
	\end{minipage}
	\hfill
	\begin{minipage}{0.11\linewidth}
		\centering
		\includegraphics[width=\linewidth]{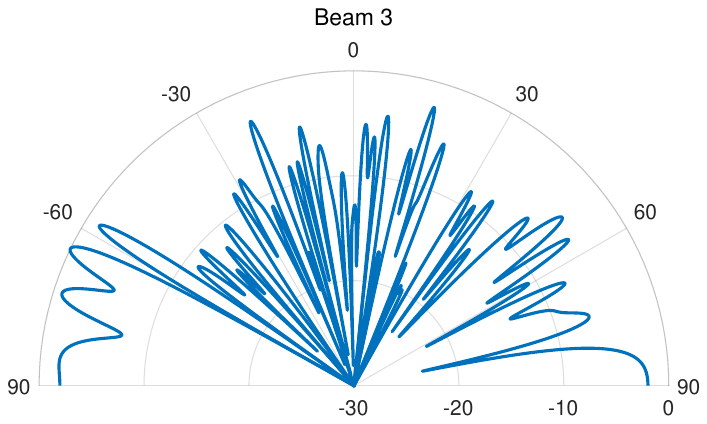}
	\end{minipage}
	\hfill
	\begin{minipage}{0.11\linewidth}
		\centering
		\includegraphics[width=\linewidth]{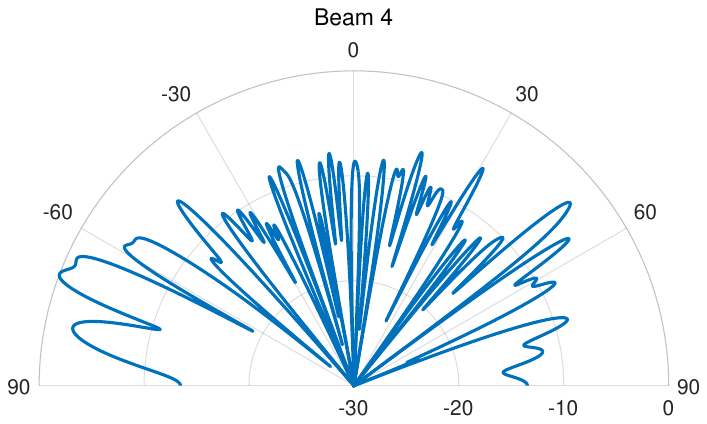}
	\end{minipage}
	\hfill
	\begin{minipage}{0.11\linewidth}
		\centering
		\includegraphics[width=\linewidth]{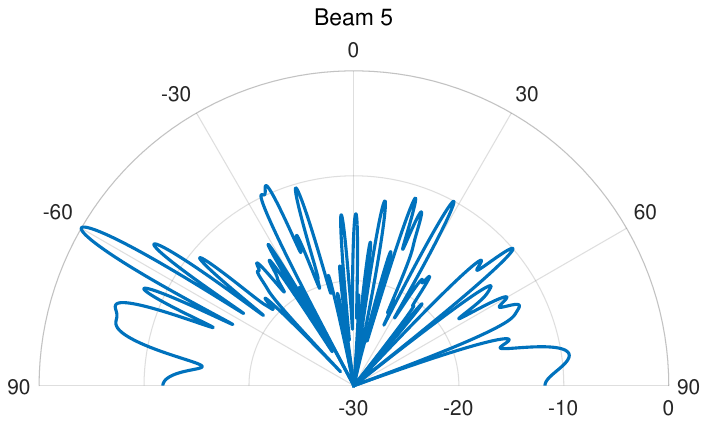}
	\end{minipage}
	\hfill
	\begin{minipage}{0.11\linewidth}
		\centering
		\includegraphics[width=\linewidth]{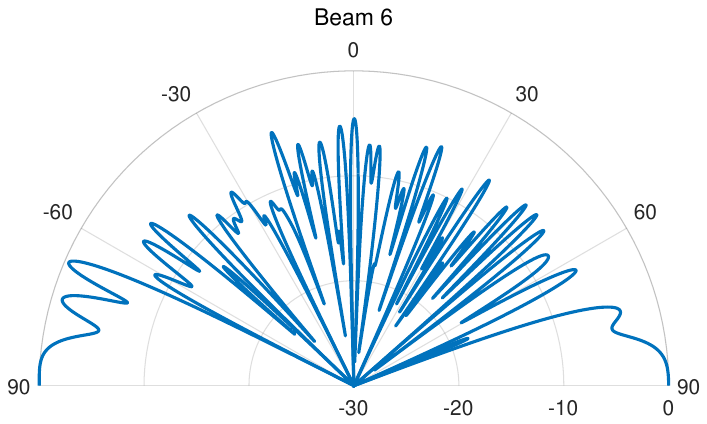}
	\end{minipage}
	\hfill
	\begin{minipage}{0.11\linewidth}
		\centering
		\includegraphics[width=\linewidth]{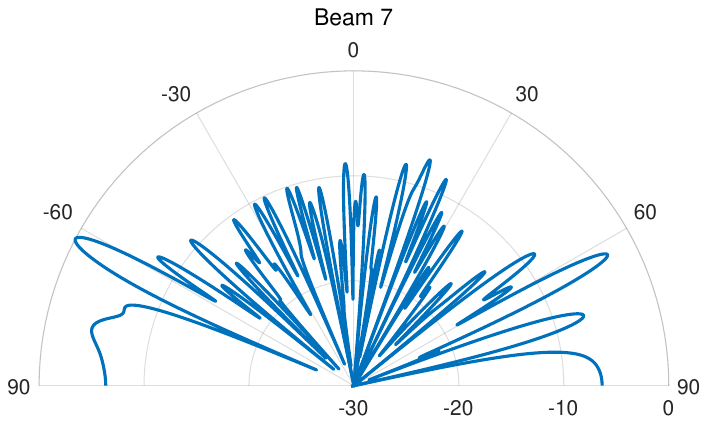}
	\end{minipage}
	\hfill
	\begin{minipage}{0.11\linewidth}
		\centering
		\includegraphics[width=\linewidth]{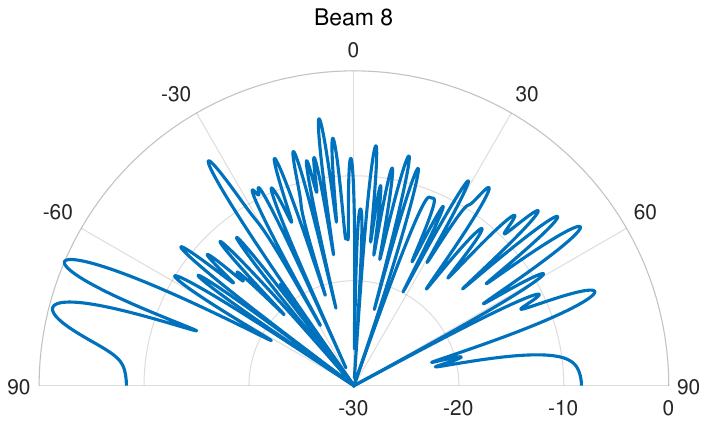}
	\end{minipage}
	\caption{Beam pattern illustrations for the learned codebook (Scenario ``O1\_28'' Beams 1–8 in the first row; Scenario ``O1b\_28'' Beams 1–8 in the second row).}
	\label{figcodebook}
\end{figure*}
\vspace{-0.5cm}
\subsection{Evaluation of Proposed CFM-based Beamforming}
To show the advantage of the proposed CFM model, we adopt three baselines: 1) the existing DisSSBF framework \cite{SSBFCB2, SSBF1}, labeled as ``MLP''; 2) conventional PMI approach, i.e., 2-tier hierarchical beam search using DFT and over-sampled DFT codebooks \cite{PMI}; 3) conventional SRS approach, i.e., generating beamformer via linear minimum mean squarer error (LMMSE) channel estimation. We assume a idealized performance for the LMMSE, i.e., its normalized mean square error (NMSE) achieves its lower bound as $\text{NMSE}=1/(1+L_s\text{SNR})$ \cite{LMMSE}. The estimated channel is then simulated as the ground truth with Gaussian perturbation governed by this NMSE. Furthermore, an implicit baseline is the MRT since the adopted models are evaluated by the array gain gap to the MRT.
\begin{figure}[t]
	\centering
	\includegraphics[scale=0.45]{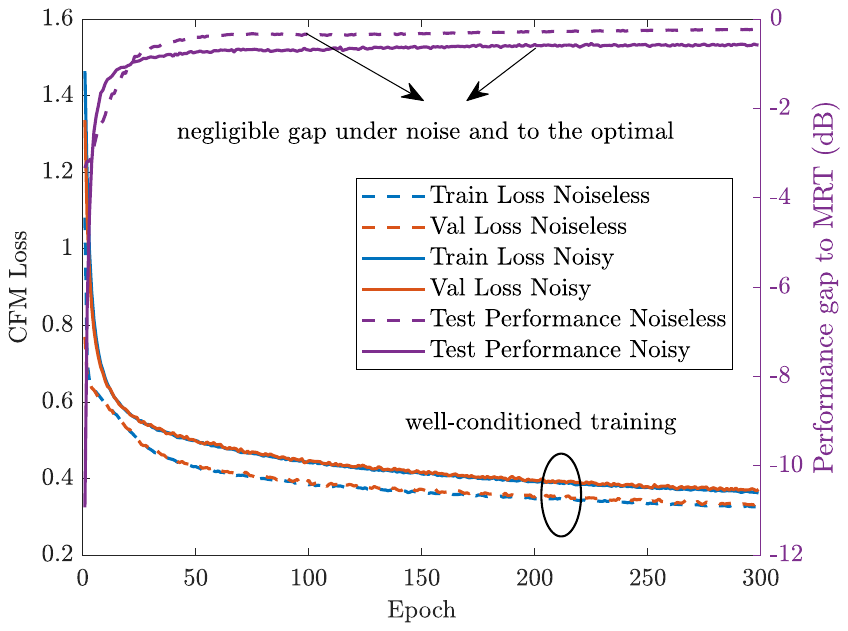}
	\caption{Convergence of the proposed CFM model}
	\label{cfm_loss}
\end{figure}

The convergence of the proposed CFM model, i.e., \textbf{Algorithms~\ref{alg2}} and\textbf{~\ref{alg3}} under scenario ``O1\_28'' is presented in Fig.~\ref{cfm_loss}. Both noiseless and noisy cases are considered. The noise and shadowing setting can refer to Table~\ref{tab1}. The purple curves corresponding to the right y-axis record the best beamforming gain among 8 candidates compared with the MRT. Noiselessly, the model shows a rapid convergence within 100 epochs. The performance gap between the best beam and the MRT is negligible and as small as 0.2 dB. However, even with shadowing and noise perturbation, the proposed CFM still exhibits a similar convergence. The performance gap to MRT only suffers a small degradation to about 0.5 dB and remains negligible. This result confirms both the effectiveness and robustness of the proposed CFM model and stage II in the framework. 
\begin{figure}[t]
	\centering
	\includegraphics[scale=0.45]{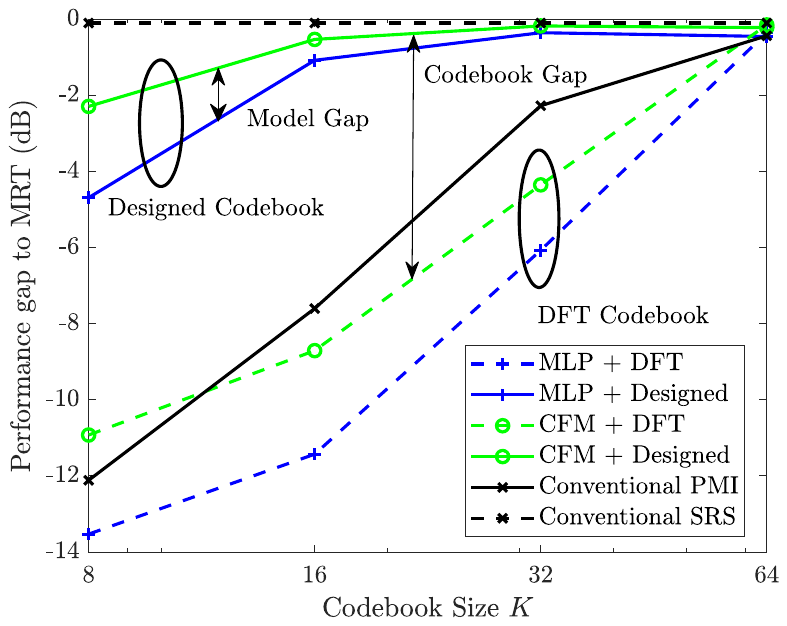}
	\caption{Performance comparison with different codebook size}
	\label{cfm_cdsize}
\end{figure}

Performance comparison for different approaches versus the codebook size is shown in Fig.~\ref{cfm_cdsize}. ``O1\_28'' scenario is adopted for this comparison. First, for both the MLP and CFM, using the designed codebook achieves significantly better performance than using the DFT codebook. Second, the proposed CFM model always outperforms the discriminative model, which confirms the previous analysis about the uncertain nature of beamformer or CSI given RSRP. Moreover, the proposed CFM also outperforms the conventional PMI approach due to the flexibility of the proposed model. The proposed CFM also keeps a small gain with the conventional SRS approach, which requires additional overhead and is granted an idealized setting. This gap vanishes as the codebook enlarges.

\begin{figure}[t]
	\centering
	\includegraphics[scale=0.45]{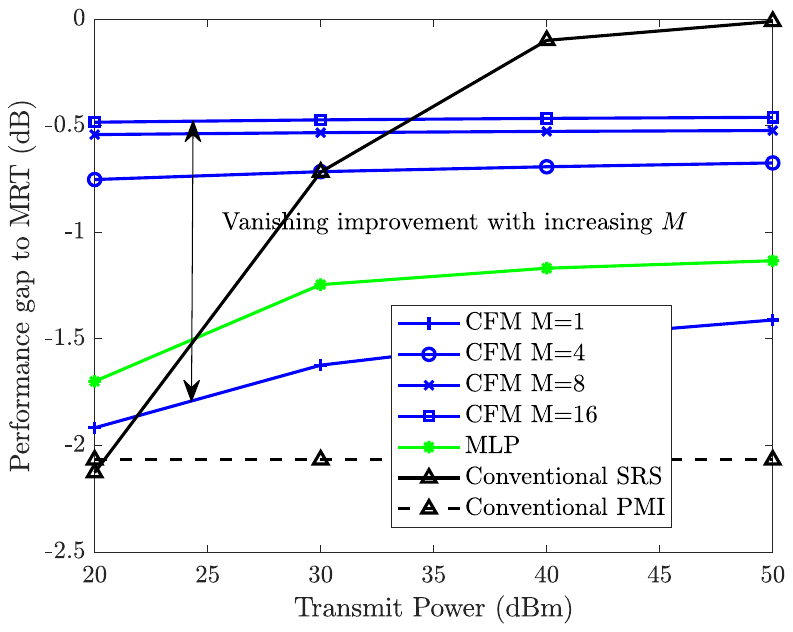}
	\caption{Performance gap to MRT versus transmit power in ``O1\_28''}
	\label{cfm_ptx_o128}
\end{figure}
\begin{figure}[t]
	\centering
	\includegraphics[scale=0.45]{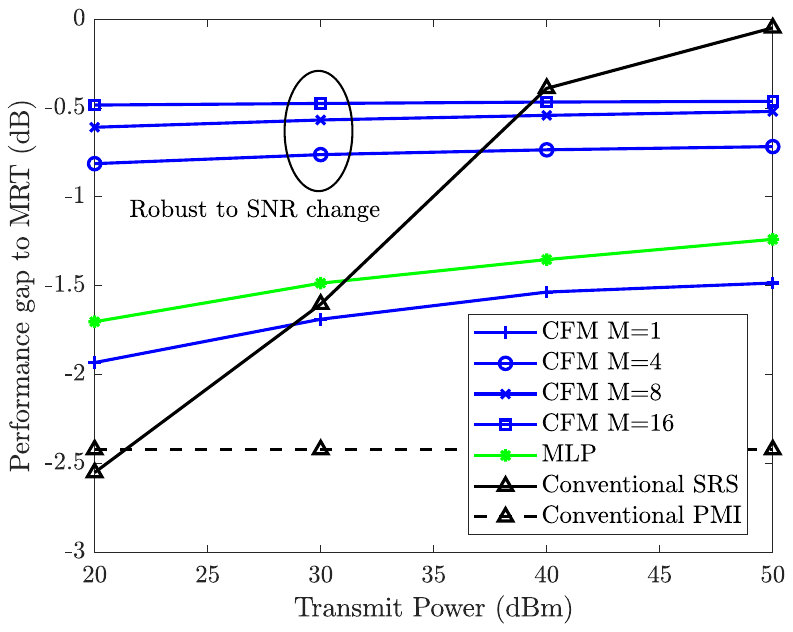}
	\caption{Performance gap to MRT versus transmit power in ``O1B\_28''}
	\label{cfm_ptx_o1b28}
\end{figure}

Fig.~\ref{cfm_ptx_o128} and~\ref{cfm_ptx_o1b28} illustrate the variation of performance gap to the optimal of different methods as the change of transmit power, i.e., SNR. Since the receiving SNR of a specific UE varies across the serving region depending on the distance to BS, we observe the average performance of UEs in the dataset by fixing the noise setting and changing transmit power. $K=32$ and $M=16$ is set for the conventional PMI scheme. Firstly, in terms of the candidate beam number, it is observed that the performance of the proposed CFM gradually approaches to the optimal as the number of candidate beams increases. However, this improvement gradually vanishes with increasing $M$. In terms of the transmit power, it is shown that, unlike the conventional CSI-based beamforming scheme, which is susceptible to the SNR change, the proposed CFM model is robust to this change with a moderate number of candidate beams.
\vspace{-0.3cm}
\subsection{Discussion}
Summarizing the results in Fig.~\ref{cfm_cdsize},~\ref{cfm_ptx_o128}, and~\ref{cfm_ptx_o1b28}, the advantages of the proposed GenSSBF is validated. Firstly, compared with the conventional PMI scheme, the proposed framework can achieve better of similar performance with significantly lower beam sweeping overhead. For the conventional SRS scheme, although facing a slight performance gap in the high SNR regime, the proposed framework exhibits stronger robustness to the decrease of SNR. Moreover, as long as a proper set of candidate beams are delivered in the beam refinement stage, the proposed framework always outperforms the existing DisSSBF framework. This comparison is summarized in Table~\ref{tab5}. Near-optimal and sub-optimal are abbreviated as n-optimal and s-optimal, respectively. 
\begin{table}[t]
	\small
	\centering
	\caption{Comparison of different beam management frameworks}
	\label{tab5}
	\begin{tabular}{ccccc}
		\toprule
		Framework & GenSSBF & DisSSBF & SRS & PMI \\
		\midrule
		Overhead     & \textbf{very low}       & \textbf{very low}            & moderate      & low            \\
		\midrule
		Performance     & \textbf{n-optimal}       & s-optimal            & \textbf{n-optimal}      & s-optimal            \\
		\bottomrule  
	\end{tabular}
\end{table}
\vspace{-0.2cm}
\section{Conclusion} \label{sec6}
This paper has proposed a novel beam management framework. Compared with the existing codebook-based (PMI) and CSI-based (SRS) schemes, the proposed framework utilized the generative artificial intelligence model, CFM in this paper, to capture the site-specific environment information, hence called GenSSBF, achieving near-optimal beamforming with very low overhead. Specifically, the framework first performs coarse channel probing using the optimized SIM codebook, which is designed based on site-specific data to maximize the amount of channel information conveyed. Then, the CFM model exploits the feedback RSRP measurements to generate a set of candidate beams, among which the strongest beam achieves near-optimal performance and is selected for data transmission. Extensive simulation results demonstrate the effectiveness of the proposed GenSSBF framework, as well as the advantages of the proposed codebook design principles and the CFM model.
\vspace{-0.2cm}
\balance
\bibliographystyle{IEEEtran}
\bibliography{reference/mybib}

\end{document}